\documentclass[onecolumn,superscriptaddress,notitlepage]{revtex4-1}

\usepackage{dcolumn}
\usepackage{bm}

\usepackage{amsmath}
\usepackage{graphicx}
\usepackage{amsbsy,amsfonts}
\usepackage{amsthm}
\usepackage{natbib}
\usepackage{hyperref}

\usepackage{mathrsfs}

\newcommand{\ket}[1]{\left|#1\right\rangle}

\newcommand{\alphavar}{\mu}
\newcommand{\Lambdavar}{\Lambda}
\newcommand{\Noracle}{{\textit{QueryCost}}}
\newcommand{\Costvar}{C}
\newcommand{\Avar}{A}
\newcommand{\sparseham}{{H}}
\newcommand{\upp}{\Lambda_{\rm max}}
\newcommand{\nn}{\nonumber \\}
\newcommand{\fyfuncnoarg}{\textbf{Col}}
\newcommand{\fyfuncone}[4]{\fyfuncnoarg(#1,#2,#3,#4)}
\newcommand{\BBfuncnoarg}{\textbf{MatrixVal}}
\newcommand{\fynoarg}{\textbf{Q}\fyfuncnoarg}
\newcommand{\BBnoarg}{\textbf{Q}\BBfuncnoarg}
\newcommand{\fyone}[3]{\textbf{Q}\fyfuncnoarg(#1,#2,#3)}
\newcommand{\sparse}[1]{$#1$-sparse}
\newcommand{\kz}{k_{0}}
\newcommand{\dt}{\Delta t}
\newcommand{\initt}{t_0}
\newcommand{\Th}[1]{$#1^{\text{th}}$}
\newcommand{\numQubitsInH}{n_H}
\newcommand{\numBitsInT}{n_t}
\newcommand{\Upsilonfcn}{\Upsilon}
\newcommand{\interval}{\Upsilon}
\newcommand{\scalconst}{K}
\newcommand{\upb}{Y}

\def\squareforqed{\hbox{\rlap{$\sqcap$}$\sqcup$}}
\def\qed{\ifmmode\squareforqed\else{\unskip\nobreak\hfil
\penalty50\hskip1em\null\nobreak\hfil\squareforqed
\parfillskip=0pt\finalhyphendemerits=0\endgraf}\fi}

\newtheorem{theorem}{Theorem}
\newtheorem{lemma}[theorem]{Lemma}
\newtheorem{definition}{Definition}
\newtheorem{corollary}[theorem]{Corollary}

\newenvironment{proofof}[1]{\begin{trivlist}\item[]{\flushleft\it
Proof of~#1.}}
{\qed\end{trivlist}}

\bibliographystyle{apssamp}
\begin{document}
\title{Simulating Quantum Dynamics On A Quantum Computer}
\author{Nathan Wiebe}
\affiliation{Institute for Quantum Information Science, University of Calgary, Alberta T2N 1N4, Canada}
\affiliation{Institute for Quantum Computing, University of Waterloo, Ontario N2L 3G1, Canada}
\author{Dominic W. Berry}
\affiliation{Institute for Quantum Computing, University of Waterloo, Ontario N2L 3G1, Canada}
\author{Peter H\o yer}
\affiliation{Institute for Quantum Information Science, University of Calgary, Alberta T2N 1N4, Canada}
\affiliation{Department of Computer Science, University of Calgary, Alberta T2N 1N4, Canada}
\author{Barry C. Sanders}
\affiliation{Institute for Quantum Information Science, University of Calgary, Alberta T2N 1N4, Canada}
\affiliation{Department of Physics \& Astronomy, University of Calgary, Alberta T2N 1N4, Canada}
\begin{abstract}
We present efficient quantum algorithms for simulating time-dependent Hamiltonian evolution of general input states using an oracular model of a quantum computer.  Our algorithms use either constant or adaptively chosen time steps and are significant
because they are the first to have time-complexities that are comparable to the best known methods for simulating time-independent
Hamiltonian evolution, given appropriate smoothness criteria on the Hamiltonian are satisfied.  We provide a thorough cost
analysis of these algorithms that considers discretizion errors in both the time and the representation of the Hamiltonian.
In addition, we provide the first upper bounds for the error in Lie-Trotter-Suzuki approximations to unitary evolution operators, that use adaptively chosen time steps.
\end{abstract}
\maketitle

\section{Introduction}\label{sec:intro}
The original motivation for quantum computers stemmed from Feynman's famous conjecture that quantum computers could efficiently simulate quantum physical systems \cite{Fey82}, whereas there is no known way to achieve this with classical computers.
This conjecture has spurred the construction of a number of quantum algorithms to efficiently simulate quantum systems under a Hamiltonian \cite{Llo96,Wie96,Zal98,BT98,AT03,Chi04,BACS07,CK10}.
However, these algorithms are primarily for time-independent Hamiltonians.
A simple extension to time-dependent Hamiltonians yields complexity scaling quadratically with the simulation time \cite{RC03}, a significant performance reduction over the near-linear scaling for the time-independent case \cite{BACS07}.
These issues can be resolved by generalizing the Lie--Trotter--Suzuki product formul\ae~to apply in the time-dependent case.
Such formul\ae~have already been developed \cite{Suz93,WBHS10}, but have not yet been applied  to quantum simulation algorithms.
Here we explicitly show how these formul\ae\ can be used in quantum algorithms to simulate time-dependent Hamiltonians with complexity near-linear in the simulation time.
We provide a number of improvements to further improve the efficiency, and a carefully accounting of the computational resources used in the simulation.

Lloyd was the first to propose an explicit quantum algorithm for simulating Hamiltonian evolution~\cite{Llo96}.
This algorithm is for systems that are composed of subsystems of limited dimension, with a time-independent Hamiltonian consisting of a sum of interaction terms.
The algorithm uses the Trotter formula to express the time evolution operator as a sequence of exponentials of these interaction Hamiltonians, which may be simulated efficiently.
As a result, the complexity of the algorithm scales as $O(\|H\|\dt)^2$, where $\dt$ is the evolution time, and $\|H\|$ is spectral norm of the Hamiltonian.

Aharonov and Ta-Shma~\cite{AT03} and Childs~\cite{Chi04} extended these ideas to apply to Hamiltonians that are sparse, but have no evident tensor product structure.
They use graph coloring techniques to decompose the Hamiltonian into a sum of one-sparse Hamiltonians, and use the Trotter formula and the Strang splitting \cite{Strang}, respectively, to write the evolution operator as a sequence of one-sparse exponentials.
The resulting sequence of exponentials can then be performed by a quantum computer.
The use of higher-order splitting formula reduces the complexity of Child's algorithm to $O(\|H\|\dt)^{3/2}$, and it was conjectured that even higher-order formul\ae~may lead to near-linear scaling~\cite{Chi04}.

This hypothesis was verified by Berry, Ahokas, Cleve and Sanders (BACS)~\cite{BACS07}.
They used Lie--Trotter--Suzuki formul\ae~\cite{Suz90} to generate arbitrarily high-order product formula approximations to the time-evolution operator, and gave an improved method for decomposing the Hamiltonian.
The use of the Lie--Trotter--Suzuki formul\ae~reduced the cost of their algorithm, causing it to scale as $(\|H\|\dt)^{1+o(1)}$.
An alternative approach using quantum walks can yield scaling strictly linear in $\|H\|\dt$ \cite{Chi09,BC09}.

High-order Trotter-like approximations for ordered operator exponentials are needed to extend the results of BACS to apply to the simulation of time-dependent
Hamiltonian evolution.
Such integrators were originally proposed by Suzuki~\cite{Suz93}, using a time-displacement superoperator.
This method is made rigorous in Ref.\ \cite{WBHS10}, where sufficiency criteria for the applicability of the formul\ae, as well as bounds for the error, are provided.

Here we explicitly apply these integrators to provide an algorithm for simulation of sparse time-dependent Hamiltonians, and
find that its complexity depends on the norms of $H(t)$ and its derivatives.
We show how adaptive time steps may be employed such that the complexity depends on average values of these norms, rather than the maximum values.
This approach provides improved efficiency in situations where the norms have a sharp peak, or a finite number of discontinuities.
For situations with singularities, we show how efficiency may be improved by adapting the order of the integrators.

We also improve the performance by specifying that the oracles that encode the Hamiltonian encode their outputs in polar form.
Given this encoding, the one-sparse exponentials may be implemented via a simple circuit.
In addition, we improve simulation efficiency by expressing the Hamiltonian as a sum in different bases.
We quantify the performance of our scheme by considering the errors that occur in every step of the algorithm, including errors
that occur because of discretization of the times used by our quantum oracles.
We provide a unified presentation taking account of all these factors, as they interact in nontrivial ways.

\section{Our Approach}

In this section, we provide a less technical explanation of the results and how they are obtained.
Then we give the rigorous proofs in the following sections.
The objective in a quantum simulation algorithm is to simulate evolution under the Schr\"{o}dinger equation
\begin{equation}
\label{eq:Schroedinger}
	\frac{\partial}{\partial t} \ket{\psi(t)}=-iH(t)\ket{\psi(t)},
\end{equation}
where $H(t)$ is the time-dependent Hamiltonian.
That is, the initial state $\ket{\psi(\initt)}$ is encoded in the qubits of the quantum computer, and we wish to obtain a state in the quantum computer encoding an approximation of the final state $\ket{\psi(\initt+\dt)}$.
A quantum computer simulation algorithm achieves this by applying an (encoded) approximation of the time-ordered exponential
\begin{equation}
	 U(\initt+\dt,\initt)=\mathcal{T}\exp\left\{-i\int_{\initt}^{\initt+\dt}H(u)\mathrm{d}u\right\},
\end{equation}
to the initial state in the quantum computer such that $\ket{\psi(\initt+\dt)}=U(\initt+\dt,\initt)\ket{\psi(\initt)}$.
Given any $\epsilon>0$, our goal is to obtain an approximation of the final state that is within trace distance $\epsilon$ of the true state.
This can be achieved \cite{BACS07} if the approximation, $\tilde U(\initt+\dt,\initt)$, satisfies
\begin{equation}
\label{eq:normdifference}
	\left\|U(\initt+\dt,\initt)-\tilde U(\initt+\dt,\initt)\right\|\le \epsilon,
\end{equation}
with~$\|\cdot\|$ defined to be the two-norm.

\subsection{Constant-Sized Time Step Simulation}
Our primary objective in this paper is to demonstrate a quantum simulation algorithm for time-dependent Hamiltonian evolution that has a time-complexity that scales as $\dt^{1+o(1)}$.  The simplest approach to do so involves combining the sparse Hamiltonian decomposition scheme of BACS \cite{BACS07}, together with the higher-order integrators from Refs.\ \cite{Suz93,WBHS10}.
BACS show that a Hamiltonian with sparseness parameter (the maximum number of nonzero elements in any nonzero row or column) of $d$ may be decomposed into $6d^2$ one-sparse Hamiltonians.
Given that the state is encoded on $n$ qubits, there is an additional factor of $\log^*n$ to the number of queries required for the simulation.
Here $\log^*$ represents the iterated logarithm function, and increases extremely slowly with $n$.
This factor arises because the decomposition requires $O(\log^*n)$ queries to the oracle for the Hamiltonian to perform the decomposition.

The BACS decomposition technique may be used in the time-dependent case.
However, one complication is that the sparseness can, in the completely general case, depend on time.
That is, the nonzero elements at one time can be different to those at a different time.
This would mean that the decomposition depends on the time, which makes the use of Lie--Trotter--Suzuki formul\ae~problematic.
To avoid that problem, we consider every matrix element that \emph{ever} attains a nonzero value to be non-zero and use the BACS decomposition algorithm on those matrix elements.  This allows us to directly apply their decomposition result.

Reference \cite{WBHS10} gives a result for exponentials of a general operator $A(t)$.
The result for Hamiltonian evolution follows by taking $A(t)=iH(t)$.
Then, for $H(t)=\sum_{j=1}^m H_j(t)$, by using a Lie-Trotter-Suzuki product formula that is accurate to order $2k$ \cite{Suz93}, simulation error within $\epsilon$ may be achieved using a number of one-sparse exponentials that scales as
\begin{equation}
O\left( mk \left( \frac {25} 3 \right)^{k} (\Lambda \dt)^{1+1/2k}/\epsilon^{1/2k} \right).
\end{equation}
Here $\Lambda$ is an upper bound on the derivatives of the Hamiltonians such that
\begin{equation}
\Lambda \ge \left( \sum_{j=1}^m \| H_j^{(p)}(t)\| \right) ^{1/(p+1)},
\end{equation}
for $t\in[\initt,\initt+\dt]$ and $p\in\{0,\ldots,2k\}$ \cite{endnote}.
The notation with superscript $(p)$ denotes repeated derivatives.
An upper bound is used, rather than the exact maximum value, because the oracles that are used only give matrix elements of the Hamiltonian, not the norm.  This is significant because methods for computing the norm of a matrix are often inefficient.
However, it is often possible to place an upper bound on the norm, even if it is not possible to determine $\Lambda$ exactly.

We convert this result into a number of oracle queries by multiplying the number of exponentials by the number of oracle queries
that are needed to simulate a one-sparse operator exponential.  By doing so, we find that if the Hamiltonian is sufficiently smooth
 then the query complexity of the algorithm scales as $(\Lambda\dt)^{1+o(1)}$.

More generally, we also consider the case where $H(t)$ has discontinuous derivatives at a finite number of times.
Such discontinuities are problematic because if a Lie--Trotter--Suzuki formula is used to integrate across such a discontinuity, then error bounds proved in~\cite{WBHS10} may not apply.  In some cases this can be rectified by reducing the order of the integrator, but this strategy is not applicable if the Hamiltonian is not at least twice differentiable.
Instead, we choose the time intervals to omit these points of discontinuity.
In order to use this approach, it is necessary that the norm of the Hamiltonian is adequately bounded, because otherwise there could be significant evolution of the system very close to the point of discontinuity.
Given this restriction it is possible to perform the simulation with complexity that is essentially unchanged.
The full result, with the required conditions, is given in Corollary \ref{cor:piecewise}.

It is important to note that the performance of our constant step size algorithm scales with the largest possible value of the norms of $H_j$ and their derivatives.  For some Hamiltonians, these values may only be large for a small fraction of the simulated evolution and so the algorithm may be inefficient.  Using adaptive time steps, we can overcome this problem and demonstrate complexities that scale with the average values of the norms of $H_j$ and their derivatives, given additional restrictions on the Hamiltonian are met.  We discuss this approach below.

\subsection{Adaptive time steps}
\label{sec:metad}
The above scaling is the direct application of the results of Refs.\ \cite{BACS07} and \cite{WBHS10}.
We improve this scaling for problems where $H(t)$ is badly behaved.
For example, the matrix $H(t)$ may be rapidly changing at some times and may be slowly varying at others.  In such cases, using constant step size methods may be inefficient because overly conservative time steps will be taken during time intervals in which the Hamiltonian is comparatively easy to simulate.
We address this by introducing adaptive time steps.
When using adaptive time steps, instead of choosing each time step to be $\dt/r$, a sequence of times $\{t_p\}$ are chosen such that $\initt< t_1<\cdots<t_r = \initt+\dt$.
The size of the time intervals can be varied, as can the order of the product formula within each interval.

We choose the duration of the time steps using a time-dependent function, $\Upsilon(t)$, that provides similar information to $\Lambda$, but at a specific time.  We express this function as
\begin{equation}
\label{eq:upb}
\Upsilon(t) \ge\left(\sum_{j=1}^m \left\| H_j^{(p)}(t)\right\|\right)^{1/(p+1)},
\end{equation}
for $p\in\{0,\ldots,2k\}$.
Throughout this work we use the notation for the average value, $\overline\interval(t_{b},t_a)$, defined by
\begin{equation}
\overline\interval(t_{b},t_a) :=\frac 1{t_b-t_a} \int_{t_a}^{t_b}\Upsilonfcn(t)\mathrm{d}t.
\end{equation}
The goal is to choose the time steps adaptively such that the number of queries depends on the average value of $\Upsilon(t)$ over the interval, rather than its maximum value (previously denoted $\Lambda$).
The full result is given in Theorem \ref{thm:adaptiveResult2}.
The basis of the method is to choose $r$ time intervals to limit the error within each time interval to be no greater than $\epsilon/r$.

To choose these time intervals appropriately, we need to know what the maximum value of $\Upsilon(t)$ is over a given interval, because the maximum value of $\Upsilon(t)$ dictates the simulation error over a short time step.
Ideally one would want a method of choosing the duration of these steps that depends only on the value of $\Upsilon(t)$ at the beginning of the interval, in order to avoid needing to know the value of $\Upsilon(t)$ over the entire interval.
We achieve this by requiring that the derivative of $\Upsilon(t)$ is appropriately bounded.
A bound on the derivative is also necessary to obtain a result depending on the average of $\Upsilon(t)$.
This is because we can demonstrate that if the value of $\Upsilon(t)$ is approximately constant during each time step,
then the complexity of
the algorithm scales with the average value of $\Upsilon(t)$ over all time steps.  We then require that the derivative of $\Upsilon(t)$ is bounded in order to guarantee this approximate constancy in the limit of short time steps.

Even given the restriction on the derivative, it is unclear how to effectively choose the time intervals.
The problem is that the time intervals are chosen to ensure that the error in each interval is no greater than $\epsilon/r$, but then the number of intervals will depend on how the intervals were chosen.
To break this circular logic, we need a way of choosing an $r_g$, such that when we choose the time intervals to have error no greater than $\epsilon/r_g$, the total number of intervals is no larger than $r_g$.
The full technique is given in the proof of Theorem \ref{thm:adaptiveResult2}. The value of $r_g$ is chosen as in Eq.\ \eqref{eq:rdef}.
The duration of each time step can then be calculated using just information about $\Upsilon(t)$ at the beginning of the time step, via an increment inversely proportional to $\Upsilon(t_p)$, as given in Eq.\ \eqref{eq:recur}.

\subsection{Resource Analysis}
Our cost analysis of these simulation algorithms focuses on the number of queries that are made to a pair of quantum oracles that provide information about the locations and values of the nonzero matrix elements of the Hamiltonian.
We provide improvements to these oracles that enable us to improve the efficiency of the simulation, as well as to more precisely quantify the resource usage.
First, we require that the output of the oracle for the values of the matrix is encoded using a qubit string as a complex number in polar form.
The advantage of this is that the one-sparse Hamiltonian evolution can be applied using a simple circuit with qubit rotations proportional to the magnitude and phase of the matrix element.
This is shown in Sec.\ \ref{sec:oracles}, and the explicit circuit is given in Fig.\ \ref{fig:simcircuit}.
Furthermore, the qubit rotations may be performed independently for each qubit of precision yielded by the oracle.

To more precisely quantify the resource usage, we examine the number of qubits that the oracles need to provide, as well as the number of bits needed to represent the time.
The oracles that are traditionally used in quantum simulation algorithms yield many-qubit approximations to the matrix elements in a single query.
The fact that the qubits yielded by the oracle may be used independently motivates using oracles that output only one qubit per query.
Doing so further reduces the number of qubits needed to simulate a one-sparse Hamiltonian evolution, because qubits that are not currently being used need not be stored.

We find that the total number of qubits accessed for the positions of the nonzero matrix elements scales as $n\log^* n$ (see Lemma \ref{lem:oraclem}), due to the need to access each of the $n$ qubits for the position.
This yields a factor of $n$ increase in the apparent complexity over that if all qubits can be obtained in a single query.
The number of qubits accessed for the values of the matrix elements is independent of $n$, but it does depend on other simulation parameters such as the error tolerance, the evolution time, the sparseness of the Hamiltonian, the norm of the derivative of $H$ and $k$.
It depends on all of these quantities logarithmically, except for $k$.
The precise result for the number of oracle queries used is given in Lemma \ref{lem:adError}.

We also consider a new source of error that is unique to the simulation of time-dependent Hamiltonians: the discretization of the time.
Because the Hamiltonian varies with time, inaccuracy in the time will result in inaccuracy in the estimate of the Hamiltonian.
This is potentially problematic for Lie-Trotter-Suzuki product formul\ae, which rely upon precise times in order to obtain higher-order scaling for the error.
The higher-order scaling is not strictly obtained when the time is discretized, so it is necessary to show that the product formul\ae\ are not overly sensitive to error in the time so that we can use our discrete oracle in place of the continuous Hamiltonian.
Another source of difficulty is in simulating systems with discontinuities.
The problem is that the technique to avoid discontinuities requires choosing times arbitrarily close to, but on one side of, the discontinuity.
With discretization of the time, the rounding may yield times on the wrong side of the discontinuity.

In contrast to the output from the oracle, there is no need to use a coherent superposition of times, and the time may be regarded as a purely classical quantity at all stages in the calculation.
This means that it is less challenging to provide the time to high accuracy than it is to obtain output from the oracle to high accuracy.
Nonetheless, it is important for the reliability of the simulation that it is not unstable with inaccuracy in the time.
We find that the simulation is stable with the time precision.
The precision required depends on $k$, $d$, $\dt$ and the maximum norm of the derivative of the Hamiltonian.
It is logarithmic in all these quantities except for $k$; see Lemma \ref{lem:adError} for the full result.
The time precision also affects the result for simulating evolution with discontinuities in Corollary \ref{cor:piecewise}.
That result holds provided the time discretization is no greater that the time between discontinuities (condition 4).

\section{Background and definitions}
\label{sec:background}
Next we describe the technical background and definitions needed to understand our full results, which are given in the next section.
For generality, we consider a Hamiltonian that is not sparse in any known basis, but is the sum of Hamiltonians that are each sparse in their own canonical basis.
That is,
\begin{equation}
H(t)=\sum_{\alphavar=1}^M H_{\alphavar}'(t),
\end{equation}
where the set of operators $\{H_{\alphavar}':\mathbb{R}\mapsto \mathbb{C}^{N\times N}; \alphavar=1,\ldots,M\}$ is sparse when represented in their canonical bases.
We express $H_{\alphavar}'(t)=T_\alphavar^{\dagger} H_\alphavar(t) T_\alphavar$, where $H_\alphavar(t)$ is sparse in the computational basis, and $T_\alphavar$ is a basis transformation that maps the computational basis to the canonical basis of $H_\alphavar(t)$.

To avoid complications due to the nonzero elements changing as a function of time, we consider only those elements which are nonzero at any time.
We then take $d$ to be an upper bound on the number of elements in any row that are nonzero at any time in the interval of interest:
\begin{definition}
The set of operators $\{H_\alphavar\}$, where~$H_\alphavar:\mathbb{R}\mapsto\mathbb{C}^{N\times N}$,
is $d$-sparse on~$\mathcal{S}\subseteq \mathbb{R}$ if for each~$\alphavar$
there are at most $d$ matrix elements in each row of~$H_\alphavar(t)$ that attain a nonzero value for any $t\in\mathcal{S}$.
\end{definition}
Given a sparse Hamiltonian, BACS \cite{BACS07} provide a method to decompose the Hamiltonian into a sum of $6d^2$ one-sparse Hamiltonians, and express the evolution as a product of evolutions under each of these one-sparse Hamiltonians.
The definition of sparseness we use here is compatible with that of BACS, and therefore each Hamiltonian $H_\alphavar(t)$ may be expressed as a sum of one-sparse Hamiltonians.
That is,
\begin{equation}
H(t)=\sum_{\alphavar=1}^M\sum_{j=1}^{6d^2} T_\alphavar^\dagger H_{\alphavar,j}(t)T_\alphavar,\label{eq:backgrounddecomp}
\end{equation}
where each~$H_{\alphavar,j}(t)$ is one-sparse.

This decomposition is enabled by a quantum oracle that answers queries about the locations and values of the nonzero matrix
elements of the Hamiltonian~\cite{AT03,Chi04,BACS07,CK10}.
For the BACS decomposition, the matrix elements of each $H_{\alphavar,j}(t)$ in~\eqref{eq:backgrounddecomp} can be calculated using $O(\log^* n)$ queries to a quantum oracle for $H_\alphavar(t)$.
This oracle yields the locations and values of the nonzero matrix elements in a specified row of $H_\alphavar$ to arbitrary precision~\cite{BACS07}.
In this case the cost of computing the matrix elements to sufficient precision is concealed within the definition of the oracle.
To explicitly take account of the number of qubits that the oracle must provide, we separate it into two oracles that yield the positions and values of the nonzero matrix elements, and yield one qubit per query.
See Sec.\ \ref{subsec:oraclecalls} for the explicit form of the oracles.

To express the evolution under the Hamiltonian as a product of evolutions under the one-sparse Hamiltonians, BACS use the product formul\ae\ of Suzuki \cite{Suz90}.
Suzuki first proposed arbitrary-order product formul\ae\ for both time-independent and time-dependent cases~\cite{Suz90,Suz93}.
These product formul\ae\ are for operator exponentials, and are not limited to Hamiltonians.
Subsequent work by the present authors showed that Suzuki's approximation method may be less accurate than expected if the operator does not vary sufficiently smoothly with time~\cite{WBHS10}.
That work provided upper bounds for the approximation error, given that the operator does satisfy a smoothness requirement.

In this work, we use the result from Ref.\ \cite{WBHS10} upper bounding the approximation error, with the operators obtained via the decomposition method of BACS.
We therefore adopt the terminology ``$P$-smooth'' and ``$\Lambda$-$P$-smooth'' from Ref.\ \cite{WBHS10}.
These are formally stated below.
\begin{definition}\label{def:suzsmooth}
The set of operators $\{H_j:j=1,\dots, m\}$ is $P$-smooth on~$\mathcal{I}\subseteq\mathbb{R}$ if, for each~$H_j$, the quantity $\max_{p=0,\ldots,P}\left\|H_j^{(p)}(t)\right\|$ is finite on~$\mathcal{I}$.
\end{definition}
\begin{definition}\label{def:lambdasuzsmooth}
The set of operators $\{H_j:j=1,\dots, m\}$ is $\Lambda$-$P$-smooth on~$\mathcal{I}\subseteq \mathbb{R}$ if $\{H_j\}$ is $P$-smooth and
$\Lambda\geq\left(\sum_{j=1}^m\left\|H_j^{(p)}(t)\right\|\right)^{1/(p+1)}$,
for all $t\in\mathcal{I}$ and~$p\in\{0,1,\cdots,P\}$.
\end{definition}

The $P$-smooth requirement is needed in order to achieve an approximation of a given order, and the $\Lambda$-$P$-smooth requirement is needed to bound the error.
In this work we also consider adaptive time steps, and then it is the upper bound on the derivatives as a function of time that is important.
We therefore introduce the following definition.
\begin{definition}\label{def:adasmooth}
The set of operators $\{H_j:j=1,\dots, m\}$ is $\Upsilonfcn$-$P$-pointwise-smooth on the interval~$\mathcal{I}\subseteq \mathbb{R}$,
for $\Upsilonfcn : \mathbb{R}\mapsto\mathbb{R}$, if $\{H_j\}$ is $P$-smooth and
$\Upsilonfcn(t)\geq\left(\sum_{j=1}^m \|H_j^{(p)}(t)\|\right)^{1/(p+1)}$,
for all $t\in\mathcal{I}$ and~$p\in\{0,1,\cdots,P\}$.
\end{definition}
In addition we adopt the terminology that a set of Hamiltonians is $\Lambda$-$\infty$-smooth if the set is $\Lambda$-$2k$-smooth for every $k>0$.  Similarly, we say that a set of Hamiltonians is $\Upsilon$-$\infty$-pointwise-smooth if it is $\Upsilon$-$2k$-pointwise-smooth for every $k>0$.

Provided $\{H_j\}$ is $\Lambda$-$2k$-smooth, for the Hamiltonian $H(t)=\sum_{j=1}^m H_j(t)$ the evolution $U(\initt+\dt,\initt)$ may be approximated via the integrator $U_k$, which is given iteratively via \cite{WBHS10}
\begin{align}
\label{eq:suzuki2}
	U_1(\initt+\dt,\initt)
		&:= \prod_{j=1}^m\exp\left\{-iH_j(\initt+\dt/2)\dt/2\right\}\prod_{j=m}^1
		 \exp\left\{-iH_j(\initt+\dt/2)\dt/2\right\},\nonumber\\
	U_\ell(\initt+\dt,\initt)
		&:= U_{\ell-1}(\initt+\dt, \initt+(1-s_\ell)\dt)U_{\ell-1}(\initt+(1-s_\ell)\dt,\initt+(1-2s_\ell)\dt)\nonumber\\
		&\quad \times U_{\ell-1}(\initt+(1-2s_\ell)\dt,\initt+2s_\ell\dt)U_{\ell-1}(\initt+2s_\ell\dt,\initt+s_\ell\dt)U_{\ell-1}(\initt+s_\ell\dt,\initt),
\end{align}
with $	s_\ell=1/(4-4^{1/\left(2\ell-1\right)})$.
This formula is implied by Suzuki's work~\cite{Suz93}, but is stated explicitly in~\cite{WBHS10}.
The approximation error is $O((\Lambda\dt)^{2k+1})$, so this formula is appropriate for short time intervals.
For longer~$\dt$, the evolution time may be divided into~$r$ subintervals, resulting in the approximation
\begin{equation}
\label{eq:Ukr}
	U\left(\initt+\dt,\initt\right)
		\approx\prod_{\ell=1}^r U_k\left(\initt+\ell\frac{\dt}{r},\initt+(\ell-1)\frac{\dt}{r}\right).
\end{equation}
The value of $r$ is then chosen large enough such that the overall error is no greater than some allowable error, $\epsilon$.
This choice of $r$ scales as $O((\Lambda\dt)^{1+1/2k}/\epsilon^{1/2k})$.
More precisely, the error in the product formula will be no greater than $\epsilon$ if we take
\begin{equation}
\label{eq:rval2}
	r=\left\lceil2\epsilon^{-1/2k}\left( 2k(5/3)^{k-1}\Lambdavar\dt\right)^{1+1/2k}\right\rceil,
\end{equation}
provided that
\begin{equation}
\label{eq:WBHS10condition}
	\epsilon\le (9/10)(5/3)^k\Lambdavar\dt.
\end{equation}
This result is equivalent to Lemma 5 of~\cite{WBHS10}, after eliminating $Q_k$ by using the inequalities in Eq.~(A.3) of that paper.
The overall complexity of the simulation is then proportional to the value of $r$.

\section{Results}
\label{sec:results}
This section formally presents our main results.
The major result is an upper bound for the query complexity used to simulate time-dependent Hamiltonian evolution, using adaptive time steps and oracles that cost one query per yielded qubit.
In order to quantify the complexity, the primary goal is to bound the error.
In simulation schemes for time-dependent Hamiltonians there are three sources of error:
\begin{enumerate}
\item integrator error from using~$U_k$,
\item error due to using a finite-bit representation of the time, and
\item error due to using a discretized representation of~$H_\alphavar$.
\end{enumerate}
To guarantee that the total error in the simulation is $\epsilon$, we ensure that the latter two errors sum to at most half the total error tolerance.
The time steps are then chosen such that the contribution to the error from the integrators at most adds up to the remaining half.
The following lemma, proved in Appendix~\ref{sec:round-off}, yields upper bounds for the number of bits of precision for these quantities that are needed to ensure that the roundoff errors are at most $\epsilon/2$.

\begin{lemma}\label{lem:adError}
Let $\{H_\alphavar:\mathbb{R}\mapsto\mathbb{C}^{2^n\times2^n};\alphavar=1,\ldots,M\}$ be a set of time-dependent Hermitian operators that is $2$-smooth and~$d$-sparse on~$\mathcal{I}$, which we take to be the union of disjoint closed intervals  $\{\mathcal{I}_j\}$.
Furthermore define $\dt:=\sup_{t_i,t_f\in\mathcal{I}}(t_f-t_i)$.
The round-off error in simulating the product of the time-evolution operators that are generated by $H(t)=\sum_\alphavar T_\alphavar^{\dagger}H_\alphavar(t) T_\alphavar$ over each $\mathcal{I}_j$, using the integrator $U_k$ is $\epsilon/2$ if
\begin{enumerate}
\item the number of bits of precision used to represent the time, $\numBitsInT$, and the number of qubits of precision used to represent the matrix elements, $\numQubitsInH$, satisfy
    \begin{align}
	\numBitsInT&\ge \left\lceil\log_2\left(\frac{(\max_{t\in \mathcal{I},\alphavar}\|\partial_tH_{\alphavar}(t)\|)(32kMd^2)(5/3)^{k-1}\dt^2}{\epsilon}\right)\right\rceil, \nn
\label{eq:n'eq0}
\numQubitsInH&\ge 2\left\lceil\log_2\left(\frac{32k M d^2(5/3)^{k-1}\upp\dt}{\epsilon}\right)\right\rceil+6,
\end{align}
\item and for each~$j$,
the length of the subinterval~$\mathcal{I}_j$ obeys
\begin{equation}
	|\mathcal{I}_j|\ge \dt/2^{\numBitsInT}, \label{eq:notshort}
\end{equation}
\end{enumerate}
where~$\upp$ is an upper bound on~$\max_{t\in \mathcal{I},\alphavar} \|H_{\alphavar}(t)\|$.
\end{lemma}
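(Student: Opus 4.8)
The plan is to compare the operator that is actually simulated with the ideal integrator approximation factor by factor. The ideal approximation to $\prod_j U$ over the subintervals $\{\mathcal{I}_j\}$ is a product of one-sparse exponentials $T_\alphavar^\dagger\exp\{-iH_{\alphavar,j}(\tau)\Delta\}T_\alphavar$, where for each factor the midpoint time $\tau$ and the signed increment $\Delta$ are fixed by that factor's position inside the nested recursion \eqref{eq:suzuki2} (applied within whatever subdivision of each $\mathcal{I}_j$ into time steps is in use), and the simulation replaces each factor by $T_\alphavar^\dagger\exp\{-i\tilde H_{\alphavar,j}(\tilde\tau)\Delta\}T_\alphavar$ with $\tilde\tau$ equal to $\tau$ rounded to $\numBitsInT$ bits and $\tilde H_{\alphavar,j}$ equal to $H_{\alphavar,j}(\tilde\tau)$ with its nonzero entries rounded to $\numQubitsInH$ qubits in polar form (the simulated factors being unitary by construction). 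Because every element that is ever nonzero is treated as nonzero, the assignment of entries to the one-sparse pieces $H_{\alphavar,j}$ by \cite{BACS07} is time-independent, so the time discretization perturbs only the values inside the exponentials and not the structure of the product. The round-off error is the operator-norm distance between these two products of unitaries, so by the telescoping inequality $\|\prod_i V_i-\prod_i\tilde V_i\|\le\sum_i\|V_i-\tilde V_i\|$ it suffices to bound each per-factor perturbation and sum.

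For a single factor the time-independent unitary $T_\alphavar$ cancels in the norm, and the estimate $\|e^{-iA\Delta}-e^{-iB\Delta}\|\le|\Delta|\,\|A-B\|$ for Hermitian $A,B$ (from $e^{-iA\Delta}-e^{-iB\Delta}=-i\Delta\int_0^1 e^{-isA\Delta}(A-B)e^{-i(1-s)B\Delta}\,\mathrm{d}s$), together with the triangle inequality, bounds the per-factor error by $|\Delta|\big(\|H_{\alphavar,j}(\tau)-H_{\alphavar,j}(\tilde\tau)\|+\|H_{\alphavar,j}(\tilde\tau)-\tilde H_{\alphavar,j}(\tilde\tau)\|\big)$, a time-discretization term plus a matrix-discretization term. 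I would bound each term pointwise and then sum it against the total weighted length $\sum|\Delta|$ taken over all factors. To control $\sum|\Delta|$: a single $U_1$ over an interval of length $L$ is a product of $12Md^2$ one-sparse exponentials (the $M$ canonical-basis pieces, each split into $6d^2$ one-sparse terms by \cite{BACS07}, each occurring twice), every one carrying $|\Delta|=L/2$, hence weighted length $6Md^2L$; the recursion \eqref{eq:suzuki2} turns $U_\ell$ into five copies of $U_{\ell-1}$ over subintervals whose absolute lengths sum to $(8s_\ell-1)L$, so a single $U_k$ over $L$ has weighted length $6Md^2L\prod_{\ell=2}^k(8s_\ell-1)$, and one checks $\prod_{\ell=2}^k(8s_\ell-1)\le 2k(5/3)^{k-1}$, the same combinatorial factor that appears in \eqref{eq:rval2}. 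This bound is insensitive to any finer subdivision of the $\mathcal{I}_j$, so summing over the disjoint subintervals (of total length at most $\dt$) gives $\sum|\Delta|\le 12kMd^2(5/3)^{k-1}\dt$.

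For the time term, each $H_{\alphavar,j}$ is one-sparse, and by the structure of the BACS decomposition both $\|H_{\alphavar,j}(t)\|$ and $\|\partial_tH_{\alphavar,j}(t)\|$ equal the corresponding largest entrywise magnitude, which is bounded by $\|H_\alphavar(t)\|$ and $\|\partial_tH_\alphavar(t)\|$ respectively (the entries of $H_{\alphavar,j}$ being entries of $H_\alphavar$); hence $\|H_{\alphavar,j}(\tau)-H_{\alphavar,j}(\tilde\tau)\|\le|\tau-\tilde\tau|\max_{t\in\mathcal{I},\alphavar}\|\partial_tH_\alphavar(t)\|\le(\dt/2^{\numBitsInT})\max_{t\in\mathcal{I},\alphavar}\|\partial_tH_\alphavar(t)\|$, and summing against the weighted-length bound gives a total time-discretization error at most $12kMd^2(5/3)^{k-1}(\dt^2/2^{\numBitsInT})\max_{t,\alphavar}\|\partial_tH_\alphavar(t)\|$, which the stated lower bound on $\numBitsInT$ drives below $\epsilon/4$. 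For the matrix term, a complex number of magnitude at most $\upp$ encoded in polar form with $\numQubitsInH$ qubits, of which roughly half carry the magnitude and half the phase, is recovered with entrywise error $O(\upp\,2^{-\numQubitsInH/2})$, so $\|H_{\alphavar,j}(\tilde\tau)-\tilde H_{\alphavar,j}(\tilde\tau)\|=O(\upp\,2^{-\numQubitsInH/2})$; summing against the weighted-length bound and imposing the stated $\numQubitsInH$, whose leading factor $2$ is the square-root loss from splitting the bits between magnitude and phase and whose $+6$ absorbs the $O(1)$ constants, drives this below $\epsilon/4$ as well, so the two contributions together give the claimed $\epsilon/2$. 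Condition \eqref{eq:notshort} enters as a consistency requirement: if $|\mathcal{I}_j|<\dt/2^{\numBitsInT}$ then the rounded time-step endpoints inside $\mathcal{I}_j$ may coincide or leave $\mathcal{I}_j$, so the simulated object is not a legitimate discretization of the intended product in the first place.

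I expect the only non-routine ingredient to be the combinatorial estimate $\prod_{\ell=2}^k(8s_\ell-1)\le 2k(5/3)^{k-1}$ together with the bookkeeping that makes $\sum|\Delta|$ telescope cleanly through the nested recursion and come out independent of the time-step subdivision; with that in hand, the contraction bound and the triangle inequality handle everything else. The remaining care is in two soft points: checking that a one-sparse BACS piece inherits both its norm and its derivative norm from the parent $H_\alphavar$, and pinning down the $O(1)$ constant in the polar-encoding error tightly enough that it fits under the explicit integer slack left in the lemma's bounds (the $+6$ in $\numQubitsInH$, and the constant $32$ in place of the smaller value that the bare count produces).
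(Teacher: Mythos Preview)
Your approach is essentially identical to the paper's: telescope over the one-sparse factors, use $\|e^{-iA\Delta}-e^{-iB\Delta}\|\le|\Delta|\,\|A-B\|$, split each factor error into a time-discretization part and a matrix-discretization part, and control the sum against a bound on $\sum|\Delta|$. The paper bounds $\sum|\Delta|$ differently---it uses $|\Delta_p|\le(2k/3^k)T_q$ for each factor (from Eq.~(A.3) of \cite{WBHS10}) and multiplies by the count $12Md^2 5^{k-1}$, which gives $\sum|\Delta|\le 8kMd^2(5/3)^{k-1}\dt$---whereas you compute the weighted length directly through the recursion.

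The only issue is numerical: your stated estimate $\prod_{\ell=2}^k(8s_\ell-1)\le 2k(5/3)^{k-1}$ gives $\sum|\Delta|\le 12kMd^2(5/3)^{k-1}\dt$, and with the lemma's constants this produces total round-off $\le 3\epsilon/4$, not $\epsilon/2$ (each of the two pieces comes out to $12/32=3/8$ of $\epsilon$, not $1/4$). Your recursive computation is actually \emph{sharper} than the paper's max-times-count in principle; you just capped it with too loose a final inequality. Tighten it to $\prod_{\ell=2}^k(8s_\ell-1)\le(4k/3)(5/3)^{k-1}$ (which holds, and is exactly what the paper's route yields) and your argument closes with the stated constants. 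Everything else---the role of condition~\eqref{eq:notshort} in guaranteeing a mesh point inside each $\mathcal{I}_j$, the inheritance of norm and derivative-norm bounds by the one-sparse pieces, and the polar-encoding error with the $+6$ absorbing the $(1+2\pi)/8$ factor---matches the paper.
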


Given these conditions on the number of bits of precision, we can then bound the number of queries to simulate the Hamiltonian using adaptive time steps as in the following theorem.

\begin{theorem}\label{thm:adaptiveResult}
If $\{H_\alphavar:\mathbb{R}\mapsto\mathbb{C}^{2^n\times 2^n};\alphavar=1,\ldots,M\}$ is a set of time-dependent Hermitian operators that is $\Upsilonfcn$-$2k$-pointwise-smooth, $d$-sparse on~$\mathcal{I}=[\initt,\initt+\dt]$, $\numBitsInT$,  and~$\numQubitsInH$ satisfy~\eqref{eq:n'eq0}, and there exists $\scalconst\in\mathbb{R}$ such that
$|\partial_t\Upsilonfcn(t)| \le \scalconst^2[\Upsilonfcn(t)]^2$ $\forall~t\in\mathcal{I}$, then for any $\epsilon\in (0,1]$, the evolution generated by $H(t)=\sum_\alphavar T_\alphavar^{\dagger}H_\alphavar(t) T_\alphavar$
can be simulated within error $\epsilon$, and with the number of queries (denoted~$\Noracle$) to our Hamiltonian oracles, satisfying
\begin{align}
\Noracle &\in O\left(\big[n\log^*n+\log(kM(5/3)^kd^2\upp\dt/\epsilon)\big]N_{\exp} \right)\label{eq:Noracleresult}, \\
N_{\exp} &\in O\left( Md^{2}k(25/3)^k\frac{\left[d^2\bar\interval(\initt+\dt,\initt)\dt\right]^{1+1/2k}}{\epsilon^{1/2k}}\right),\label{eq:NBTresult}
\end{align}
where~$\log^*$ is the iterated logarithm function and~$\upp$ is given in Lemma~\ref{lem:adError}.  The number of calls to~$\{T_\alphavar\}$, namely~$N_{\rm T}$, satisfies $N_{\rm T}\in O\left( N_{\exp}/(3d^2)\right)$.
\end{theorem}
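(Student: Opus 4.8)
The plan is to build the simulation algorithm in three layers --- reduce $H(t)$ to one-sparse summands, choose an adaptive time grid, and implement each step with the discretized oracle --- and then bound the total error and count the queries. First I would apply the BACS decomposition~\eqref{eq:backgrounddecomp} to each $H_\alphavar$, producing $6Md^2$ one-sparse operators $H_{\alphavar,j}(t)$ together with the basis changes $T_\alphavar$. Because a one-sparse operator has norm equal to the largest modulus of its entries, $\|H_{\alphavar,j}^{(p)}(t)\|\le\|H_\alphavar^{(p)}(t)\|$; summing over the $6d^2$ pieces of a fixed $\alphavar$ and then over $\alphavar$ shows that $\{H_{\alphavar,j}\}$ is $\Upsilonfcn_1$-$2k$-pointwise-smooth with $\Upsilonfcn_1(t):=6d^2\,\Upsilonfcn(t)$ (the case $p=0$ forces the full factor $6d^2$, so no fractional power can be extracted), and that the derivative hypothesis is inherited, $|\partial_t\Upsilonfcn_1(t)|\le\scalconst^2[\Upsilonfcn_1(t)]^2$. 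From here on everything is phrased in terms of these $m:=6Md^2$ one-sparse operators.

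\emph{Adaptive grid and the self-referential choice of $r$.} On a single step $[t_p,t_{p+1}]$ of length $\tau_p$ the set $\{H_{\alphavar,j}\}$ is $\Lambda_p$-$2k$-smooth there with $\Lambda_p:=\max_{t\in[t_p,t_{p+1}]}\Upsilonfcn_1(t)$, so by~\eqref{eq:rval2}--\eqref{eq:WBHS10condition} a single application of $U_k$ on that step has error at most $\epsilon/(2r_g)$ as soon as $k(5/3)^k\Lambda_p\tau_p$ is at most a fixed multiple of $(\epsilon/r_g)^{1/(2k+1)}$, where $r_g$ is a value yet to be chosen. The circularity --- the number of steps depends on the step sizes, which are set from a per-step tolerance $\epsilon/r_g$ that one would like to equal $\epsilon$ divided by the number of steps actually produced --- is broken by fixing $r_g$ in advance. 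Concretely, set $\delta_g$ to a suitable constant times $(\epsilon/r_g)^{1/(2k+1)}/\bigl(k(5/3)^k\bigr)$ and generate the grid by the recurrence $t_{p+1}=t_p+\delta_g/\Upsilonfcn_1(t_p)$, stopped once $\initt+\dt$ is reached; this uses only $\Upsilonfcn$ at the left endpoint of each step. The derivative hypothesis forces $\Upsilonfcn_1$ to change by only a constant factor over a step of length $\Theta(1/\Upsilonfcn_1(t_p))$, so $\Lambda_p=\Theta(\Upsilonfcn_1(t_p))$ and the per-step error target is met. To count the steps I would telescope $\dt=\sum_p\tau_p=\delta_g\sum_p 1/\Upsilonfcn_1(t_p)$ and apply a Gr\"onwall-type estimate: the same bound gives $\int_{t_p}^{t_{p+1}}\Upsilonfcn_1(t)\,\mathrm{d}t=\bigl(1+O(\scalconst^2\delta_g)\bigr)\,\tau_p\Upsilonfcn_1(t_p)=\bigl(1+O(\scalconst^2\delta_g)\bigr)\,\delta_g$, so summation yields $r=\bigl(1+O(\scalconst^2\delta_g)\bigr)\cdot6d^2\,\dt\,\overline{\interval}(\initt+\dt,\initt)/\delta_g$ --- here the \emph{average} of $\Upsilonfcn$, rather than its maximum, appears. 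Substituting $\delta_g$ and demanding $r\le r_g$ reduces to an inequality satisfied by $r_g=\bigl\lceil\bigl(c\,k(5/3)^kd^2\,\overline{\interval}(\initt+\dt,\initt)\,\dt\bigr)^{1+1/2k}/\epsilon^{1/2k}\bigr\rceil$ for an absolute constant $c$; the constant $\scalconst$ enters only through the harmless requirement $\scalconst^2\delta_g\le1$, which holds in the parameter range of interest and is otherwise absorbed into the $O(\cdot)$.

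\emph{Errors and query count.} The output is $\tilde U=\prod_{p=1}^{r}\tilde U_k(t_p,t_{p-1})$, the adaptive analogue of~\eqref{eq:Ukr}, with the times stored to $\numBitsInT$ bits and the matrix entries to $\numQubitsInH$ qubits. Writing $U$ as the corresponding product of exact sub-evolutions and using the triangle inequality for products of unitaries, the integrator contribution to $\|U-\tilde U\|$ is at most $\sum_{p=1}^{r}\epsilon/(2r_g)\le\epsilon/2$ since $r\le r_g$, while by Lemma~\ref{lem:adError} the round-off contribution --- from the finite representations of the time and of $H_\alphavar$ --- is at most $\epsilon/2$: its hypotheses~\eqref{eq:n'eq0} and~\eqref{eq:notshort} hold (the latter trivially, $\mathcal I$ being a single interval), and one checks that the adaptive steps are not shorter than the resolution $\dt/2^{\numBitsInT}$ it guarantees, using $\epsilon\le1$. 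Hence $\|U-\tilde U\|\le\epsilon$. For the queries, $\tilde U_k$ on one step is a product of $2m\cdot5^{k-1}=12Md^2\cdot5^{k-1}$ one-sparse exponentials, so $N_{\exp}=r\cdot2m\cdot5^{k-1}\le12Md^2\cdot5^{k-1}\cdot r_g$. Each exponential is realized by computing the relevant column and entry of the appropriate $H_{\alphavar,j}$ from the oracle for $H_\alphavar$ (the BACS column procedure) and running the circuit of Fig.~\ref{fig:simcircuit}; by Lemma~\ref{lem:oraclem} this costs $O(n\log^*n)$ queries to the position oracle and $O(\numQubitsInH)=O\bigl(\log(kM(5/3)^kd^2\upp\dt/\epsilon)\bigr)$ queries to the value oracle, giving $\Noracle\le N_{\exp}\cdot O\bigl(n\log^*n+\log(kM(5/3)^kd^2\upp\dt/\epsilon)\bigr)$. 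Substituting the value of $r_g$ and simplifying with $5^k(5/3)^k=(25/3)^k$ and $k^{1+1/2k}=O(k)$ reproduces~\eqref{eq:Noracleresult}--\eqref{eq:NBTresult}. Finally, ordering the factors of $\tilde U_k$ so that the $6d^2$ one-sparse pieces of each fixed $\alphavar$ are contiguous makes the flanking $T_\alphavar^\dagger,T_\alphavar$ cancel in adjacent pairs, leaving $N_{\rm T}\in O(N_{\exp}/(3d^2))$.

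\emph{Main obstacle.} The delicate step is the self-referential choice of $r_g$ together with the Gr\"onwall/Riemann-sum estimate that replaces $\sum_p 1/\Upsilonfcn_1(t_p)$ by $\int\Upsilonfcn_1$: this is simultaneously what forces the step count down to $O(\overline{\interval}\,\dt)$ rather than $O(\Lambda\dt)$ and the only place the hypothesis $|\partial_t\Upsilonfcn|\le\scalconst^2\Upsilonfcn^2$ enters, so making the constants and the side condition~\eqref{eq:WBHS10condition} line up cleanly through that step is where the work lies.
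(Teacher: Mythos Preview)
Your proposal is correct and follows essentially the same route as the paper: BACS decomposition to one-sparse pieces, an adaptive grid whose step at $t_p$ is inversely proportional to $\Upsilon$ at $t_p$, the differential inequality $|\Upsilon'|\le K^2\Upsilon^2$ to control $\max_{[t_p,t_{p+1}]}\Upsilon$ by $\Upsilon(t_p)$ and to turn the step-count sum into an integral of $\Upsilon$, a self-consistent choice of $r_g$ so that the per-step tolerance $\epsilon/r_g$ actually produces at most $r_g$ steps, Lemma~\ref{lem:adError} for round-off, and Lemma~\ref{lem:oraclem} for the per-exponential query cost.

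The only notable difference is organizational. The paper packages the adaptive-grid argument into a separate explicit (non-asymptotic) bound, Theorem~\ref{thm:adaptiveResult2}, with recurrence $t_{p+1}=t_p+1/[\Upsilon(t_p)(Y+K^2)]$ including an explicit $K^2$ shift; this yields the clean inequalities~\eqref{eq:lbnds} and the exact additive term $3K^2\overline\Upsilon\dt$ in $r_g$, which is then declared asymptotically subdominant when passing to Theorem~\ref{thm:adaptiveResult}. You instead work asymptotically from the outset, absorb the $K$-dependence into the side condition $K^2\delta_g\le 1$ and the $O(\cdot)$, and never write the non-asymptotic intermediate. Both lead to the same~\eqref{eq:Noracleresult}--\eqref{eq:NBTresult}; the paper's version additionally gives a usable explicit constant, while yours is slightly shorter. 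Your extra remark that the adaptive steps are not shorter than the time resolution is unnecessary: condition~\eqref{eq:notshort} of Lemma~\ref{lem:adError} concerns the subintervals $\mathcal I_j$ of $\mathcal I$, not the integrator steps, and here $\mathcal I$ is a single interval.
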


This result is stated in asymptotic (big-O) notation wherein we take
\begin{equation}
\bar\interval(\initt+\dt,\initt)\dt,\ \upp\dt,\ \max_{t\in \mathcal{I},\alphavar}\|\partial_t \sparseham_{\alphavar}(t)\|\dt^2,\ M,\ d,\ n,\ \epsilon^{-1},\ k
\end{equation}
to be our asymptotic parameters.
Specifically, we say for two functions of these parameters, $f$ and~$g$, that~$f \in O(g)$ if there exists a constant $a>0$
such that~$|f|\le a |g|$ if \emph{all} of these parameters are sufficiently large.

If $H(t)$ is the sum of sufficiently smooth terms and bounded on~$t\in[\initt,\infty)$, then~$k$  can be chosen such that~\eqref{eq:Noracleresult} and~\eqref{eq:NBTresult} scale nearly linearly with~$\bar\interval(\initt+\dt,\initt)\dt$.
This observation is important because linear scaling is known to be a lower bound, and therefore our time scaling is nearly optimal~\cite{BACS07,CK09}.
This observation is stated formally in the following corollary.

\begin{corollary}\label{cor:linscale}
If, in addition to the requirements of Theorem~\ref{thm:adaptiveResult}, $\{H_\alphavar\}$ is $\Upsilonfcn$-$\infty$-pointwise-smooth and \sparse{d} on~$[\initt,\infty)$ then,
\begin{equation}
	\Noracle\in [n\log^*n+\numQubitsInH]Md^4\bar\interval(\initt+\dt,\initt)\dt\big(d^2\bar\interval(\initt+\dt,\initt)\dt/\epsilon\big)^{o(1)}. \label{eq:Noraclescale2}
\end{equation}
\end{corollary}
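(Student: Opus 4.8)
The plan is to invoke Theorem~\ref{thm:adaptiveResult} with the integrator order $k$ left free, and then choose $k$ to grow slowly with the other parameters so that the $k$-dependent overhead in $N_{\exp}$ becomes a subpolynomial factor. By definition, $\Upsilonfcn$-$\infty$-pointwise-smoothness of $\{H_\alphavar\}$ means the set is $\Upsilonfcn$-$2k$-pointwise-smooth for every positive integer $k$, so the hypotheses of Theorem~\ref{thm:adaptiveResult} are met for every such $k$ (the $d$-sparseness on $[\initt,\infty)$, the bound $|\partial_t\Upsilonfcn(t)|\le\scalconst^2[\Upsilonfcn(t)]^2$, and the precision conditions~\eqref{eq:n'eq0} are all inherited). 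Hence for every $k$ we have $N_{\exp}\in O\big(Md^2k(25/3)^k[d^2\bar\interval(\initt+\dt,\initt)\dt]^{1+1/2k}/\epsilon^{1/2k}\big)$, with the implied constant uniform over all of the asymptotic parameters, $k$ included, by the paper's stated convention for $O(\cdot)$; this uniformity in $k$ is what legitimizes substituting a $k$ that depends on the other parameters.

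Next I would set $\lambda:=d^2\bar\interval(\initt+\dt,\initt)\dt/\epsilon$ and factor $[d^2\bar\interval(\initt+\dt,\initt)\dt]^{1+1/2k}/\epsilon^{1/2k}=d^2\bar\interval(\initt+\dt,\initt)\dt\cdot\lambda^{1/2k}$, so that the bound reads $N_{\exp}\in O\big(Md^4\bar\interval(\initt+\dt,\initt)\dt\cdot k(25/3)^k\lambda^{1/2k}\big)$. The one genuinely nontrivial step is to choose $k$ making $k(25/3)^k\lambda^{1/2k}=\lambda^{o(1)}$. Taking any integer $k\sim\sqrt{\ln\lambda}$, for instance $k=\lceil\sqrt{\ln\lambda}\rceil$, the logarithm of this factor equals $\ln k+k\ln(25/3)+(\ln\lambda)/2k$, and each term is $O(\sqrt{\ln\lambda})=o(\ln\lambda)$; therefore $k(25/3)^k\lambda^{1/2k}=\exp\big(O(\sqrt{\ln\lambda})\big)=\lambda^{O(1/\sqrt{\ln\lambda})}=\lambda^{o(1)}$. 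For sufficiently large $\lambda$ this $k$ is a legitimate positive integer, and it is only polylogarithmic in the asymptotic parameters, so it tends to infinity precisely when they do, consistent with the regime in which Theorem~\ref{thm:adaptiveResult} is asserted. We conclude $N_{\exp}\in Md^4\bar\interval(\initt+\dt,\initt)\dt\,\big(d^2\bar\interval(\initt+\dt,\initt)\dt/\epsilon\big)^{o(1)}$.

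Finally, for $\Noracle$ I would absorb the bracket of~\eqref{eq:Noracleresult}: the precision condition~\eqref{eq:n'eq0} makes $\numQubitsInH$ at least a fixed multiple of $\log\big(kMd^2(5/3)^{k-1}\upp\dt/\epsilon\big)$, so $n\log^*n+\log\big(kM(5/3)^kd^2\upp\dt/\epsilon\big)\in O(n\log^*n+\numQubitsInH)$. Multiplying by the bound on $N_{\exp}$ from the previous paragraph yields exactly~\eqref{eq:Noraclescale2}. (Because the chosen $k$ is polylogarithmic, $\numQubitsInH$ is itself polylogarithmic in the parameters, which is what makes the scaling near-linear in $\bar\interval(\initt+\dt,\initt)\dt$; this is not needed for the statement, however.)

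There is no deep obstacle: this is the standard ``order growing like the square root of the logarithm'' argument. The two points needing care are (i) that the $O(\cdot)$ in Theorem~\ref{thm:adaptiveResult} is uniform in $k$, so a parameter-dependent $k$ may be substituted, and (ii) that this $k$ must grow without bound yet remain polylogarithmic, so that it engages the asymptotics without reintroducing a polynomial overhead. Choosing the optimal constant in $k\sim c\sqrt{\ln\lambda}$ (which balances $k\ln(25/3)$ against $(\ln\lambda)/2k$) would tighten the hidden exponent, but is irrelevant to the $\lambda^{o(1)}$ conclusion.
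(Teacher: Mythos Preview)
Your proposal is correct and follows essentially the same approach as the paper: the paper also picks $k=\lceil\sqrt{\tfrac12\log_{25/3}(d^2\bar\interval\dt/\epsilon)}\rceil$ (your $k\sim\sqrt{\ln\lambda}$ differs only by an irrelevant constant), observes that $k$, $(25/3)^k$, and $\lambda^{1/2k}$ are each $\lambda^{o(1)}$, and substitutes. The only routing difference is that the paper substitutes into the non-asymptotic bound of Theorem~\ref{thm:adaptiveResult2} and then applies Lemma~\ref{lem:oraclem} for $C$, whereas you go through Theorem~\ref{thm:adaptiveResult} directly and absorb the $\log(\cdot)$ bracket into $\numQubitsInH$ via~\eqref{eq:n'eq0}; both are valid and amount to the same thing.
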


Before proceeding to show how to perform simulations with adaptive time steps, we first give the explicit scheme without adaptive time steps, and show how to take account of discontinuities in the Hamiltonian.

\section{Simulating Time-Dependent Hamiltonians}
\label{sec:timedep}

In this section, we show how to simulate time-dependent Hamiltonian evolution on a quantum computer.
In particular, we show that sparse time-dependent Hamiltonian evolution can be simulated efficiently provided that the Hamiltonian is at least piecewise twice-differentiable.
We also show that if $H(t)$ is sufficiently smooth,  then the query complexity of our simulation scheme is comparable to the cost of the BACS algorithm for
simulating time-independent Hamiltonian evolution.
First we give the explicit description of the oracles used, then we give the precise result for the complexity of the simulation in terms of these oracles.

\subsection{Oracle Calls}
\label{subsec:oraclecalls}

The time-dependent Hamiltonian over an $N$-dimensional Hilbert space~$\mathscr{H}$ can be represented by a matrix with
elements~$H_{xy}$, for~$x$ the row number and~$y$ the column number.
We consider a quantum oracle that can be queried to provide information about the locations and values of the nonzero matrix elements.
For additional generality, in this work we assume that the Hamiltonian is in the form
\begin{equation}
	H(t)=\sum_{\alphavar=1}^M T^\dagger_\alphavar H_\alphavar(t) T_\alphavar,\label{eq:hamtrans}
\end{equation}
where~$\{H_\alphavar\}$ is \sparse{d}.
This takes account of cases where the overall Hamiltonian is not sparse, but it may still be simulated efficiently using efficient basis transformations $T_\alphavar$ \cite{Zal98,BT98,RC03,Chi09}.
We therefore require oracles to give the locations and values of the nonzero elements of the matrix representation of each $H_\alphavar(t)$.

Our first oracle provides the column numbers of the nonzero matrix
elements in a given row of any $H_\alphavar$.  The function yields a requested bit
of a particular entry in a list of~$d$ column numbers.  This list contains the column number of every matrix element in a specified
row of~$H_\alphavar$ that attains a nonzero value.  This function is $\fyfuncnoarg$, where~$\fyfuncone{p}{i}{x}{\alphavar}$ yields the \Th{p} bit of the \Th{i}
potentially nonzero matrix element in row $x$ of~$H_{\alphavar}$.

Our second oracle provides a requested matrix element of~$H_\alphavar(t)$.
This function yields a requested bit of a binary encoding of a given matrix element evaluated at a specified time.  We denote this function as~$\BBfuncnoarg$, and define $\textbf{MatrixVal} ({p},{x},{y},{\alphavar},q)$ to yield
the \Th{p} bit of a binary encoding of the matrix element $\left[H_{\alphavar}(t_q)\right]_{xy}$.
To take account of discretization of the time, the time is specified by an integer $q$, which gives a time from a finite mesh $\{t_q\}$, with
\begin{equation}
	t_q = \initt+(q-1/2)\dt/2^{\numBitsInT},
\end{equation}
where $\numBitsInT$ is a positive integer and $[\initt,\initt+\dt]$ contains the simulation time interval.
We choose the matrix elements to be encoded in polar form, $(H_{\alphavar}(t))_{xy}=\rho(t)\exp(i\phi(t))$, where~$\rho$ and~$\phi$ are real numbers.
For convenience, we also assume that~$\rho(t)$ is encoded as
\begin{equation}
\label{eq:encode}
	\upp\left(\frac{\rho_1}{2}+\frac{\rho_2}{2^2}
		 +\cdots+\frac{\rho_{\numQubitsInH}}{2^{\numQubitsInH}}\right),
\end{equation}
where~$\rho_j$ refers to the \Th{j} bit of~$\rho$, and~$\upp$ is an upper bound for
$\max_{t\in \mathcal{I}}\max_{\mu=1\ldots M}\|H_{\alphavar}(t)\|_{\rm max}$.

The quantum oracles are unitary operations that give the values of these functions.
That is, for classical inputs~$p$, $i$ and~$\alphavar$ and the quantum input~$\ket{x}$,
\begin{equation}
	 \fyone{p}{i}{\alphavar}\ket{x}\ket{0}=\ket{x}\ket{\fyfuncone{p}{i}{x}{\alphavar}}.
\end{equation}
This differs from Ref.\ \cite{BACS07}, where $i$ was given as a quantum input.
In that work no superposition over the $i$ was needed, so it does not change the analysis to give it as a classical input.
Similarly, for classical inputs~$p$, $\alphavar$ and~$q$ and the quantum inputs~$\ket{x}$ and~$\ket{y}$,
\begin{equation}
	 \BBnoarg(p,\alphavar,q)\ket{x}\ket{y}\ket{0}=\ket{x}\ket{y}\ket{\textbf{MatrixVal} ({p},{x},{y},{\alphavar},q)}.\label{eq:BBdef}
\end{equation}

In the following section, we present asymptotic estimates of the query complexity for simulating time-dependent Hamiltonian evolution using a sequence of approximations $U_k$, which are implemented on a quantum computer equipped with oracles~$\BBnoarg$, $\fynoarg$ and~$\{T_\alphavar\}$.
We will quantify the number of calls to $\BBnoarg$ and $\fynoarg$ together as $\Noracle$, and quantify the number of calls to $\{T_\alphavar\}$ separately as $N_{\rm T}$.

\subsection{Constant Timestep Simulation Method}\label{subsec:constanttime step}

The simulation problem is as follows.
Given the oracles~$\BBnoarg$, $\fynoarg$ and the set of oracles~$\{T_\alphavar\}$, we wish to simulate the evolution generated by the Hamiltonian given in~\eqref{eq:hamtrans}, for $\{H_\alphavar\}$ $\Lambda$-$2k$-smooth and \sparse{d}.
In addition, the user is provided with an upper bound for the norm of each~$H_\alphavar(t)$ and a similar upper bound for the norms of its derivatives.
Our task is to provide an upper bound for the number of oracle queries that are needed to simulate the evolution generated by $H(t)$ within error $\epsilon$.
These upper bounds are given in the following lemma.

\begin{lemma}\label{lem:timedepsim}
If $\{H_\alphavar:\mathbb{R}\mapsto\mathbb{C}^{2^n\times 2^n};\alphavar=1,\ldots,M\}$  is a set of time-dependent Hermitian operators that is $\Lambda$-$2k$-smooth
and \sparse{d} on~$[\initt,\initt+\dt]$, then for any $\epsilon\in (0,1]$ the evolution generated by $H(t)=\sum_\alphavar T_\alphavar^{\dagger}H_\alphavar(t) T_\alphavar$ can be simulated with the error due to the integrator bounded by $\epsilon/2$, and a number of queries to $\fynoarg$ and $\BBnoarg$ satisfying
\begin{align}
\Noracle&\leq 12 CMd^2 5^{k-1} \left\lceil 24kd^2\Lambdavar\dt \left(\frac{5}{3} \right)^{k}\left(\frac{6d^2\Lambdavar\dt}{\tilde\epsilon/2}\right)^{1/2k}\right\rceil\label{eq:Noracle0},
\end{align}
where~$\tilde{\epsilon}:= \min\{\epsilon,18(5/3)^{k-1}d^2\Lambdavar\dt\}$, $\Costvar$ is the number of oracle calls needed to
 simulate a one-sparse Hamiltonian.
The number of queries to $\{T_\alphavar\}$, $N_{\rm T}$, is bounded above by $\Noracle/(3\Costvar d^2)$.
\end{lemma}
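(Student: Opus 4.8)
The plan is to assemble the bound from three ingredients: the BACS decomposition of $H(t)$ into one-sparse pieces, the product-formula error estimate recalled in Eqs.~\eqref{eq:rval2}--\eqref{eq:WBHS10condition}, and a count of the one-sparse exponentials (and of the basis changes $T_\alphavar$) in the resulting circuit. First I would apply the BACS decomposition to each $H_\alphavar$. Since $\{H_\alphavar\}$ is $d$-sparse on $[\initt,\initt+\dt]$ in the sense used here --- every matrix element that is ever nonzero is counted --- the colouring is time-independent, so $H_\alphavar(t)=\sum_{j=1}^{6d^2}H_{\alphavar,j}(t)$ with each $H_{\alphavar,j}(t)$ one-sparse and with entries that are, for every $t$, among the entries of $H_\alphavar(t)$. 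Hence $H(t)=\sum_{\alphavar=1}^M\sum_{j=1}^{6d^2}T_\alphavar^\dagger H_{\alphavar,j}(t)T_\alphavar$ is a sum of $m:=6Md^2$ terms, each exponentiable with $\Costvar$ oracle queries (the $O(\log^*n)$ cost of producing the colouring being folded into $\Costvar$).

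Second I would record the effective smoothness parameter of this $m$-term family. As $H_{\alphavar,j}(t)$ is one-sparse and Hermitian, its spectral norm is the largest magnitude among its entries, so after $p$ differentiations $\|\partial_t^p H_{\alphavar,j}(t)\|\le\|H_\alphavar^{(p)}(t)\|$; conjugation by the time-independent unitary $T_\alphavar$ preserves the norm and commutes with $\partial_t$. Therefore $\sum_{\alphavar,j}\big\|\partial_t^p\big(T_\alphavar^\dagger H_{\alphavar,j}(t)T_\alphavar\big)\big\|\le 6d^2\sum_{\alphavar}\|H_\alphavar^{(p)}(t)\|\le 6d^2\Lambdavar^{p+1}$, and since $(6d^2)^{1/(p+1)}\le 6d^2$ for $p\ge 0$, the family $\{T_\alphavar^\dagger H_{\alphavar,j}T_\alphavar\}$ is $(6d^2\Lambdavar)$-$2k$-smooth on $[\initt,\initt+\dt]$.

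Third I would invoke the product-formula estimate. Set the allowed integrator error to $\epsilon':=\tilde\epsilon/2$ and approximate $U(\initt+\dt,\initt)$ by $U_k$ over $r$ equal subintervals as in~\eqref{eq:Ukr}. The one thing to check for applicability is the hypothesis~\eqref{eq:WBHS10condition} with $\Lambdavar\mapsto 6d^2\Lambdavar$ and $\epsilon\mapsto\epsilon'$, i.e.\ $\tilde\epsilon/2\le(9/10)(5/3)^k(6d^2\Lambdavar\dt)$, which holds for the stated $\tilde\epsilon$ because $9(5/3)^{k-1}d^2\Lambdavar\dt=(9/10)(5/3)^k\cdot 6d^2\Lambdavar\dt$ identically; the integrator error is then at most $\epsilon'\le\epsilon/2$ once $r=\big\lceil 2(\epsilon')^{-1/2k}\big(2k(5/3)^{k-1}\cdot 6d^2\Lambdavar\dt\big)^{1+1/2k}\big\rceil$ by~\eqref{eq:rval2}. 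It remains to bound this $r$ by the bracketed quantity in~\eqref{eq:Noracle0}: expanding the exponent $1+1/2k$ and comparing factor by factor, the two differ only by $\tfrac35\big(2k(5/3)^{k-1}\big)^{1/2k}$, which is $\le 1$ for every integer $k\ge 1$ because $2k\le(5/3)^{k+1}$ (immediate at $k=1$, and preserved thereafter since the right side grows geometrically and the left linearly); this lets one trade $(5/3)^{k-1}$ for $(5/3)^k$ in the prefactor and shrink the argument of the $2k$-th root down to $6d^2\Lambdavar\dt/(\tilde\epsilon/2)$.

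Finally I would count circuit elements. Each $U_1$ in~\eqref{eq:suzuki2} for the $m$-term Hamiltonian has at most $2m=12Md^2$ one-sparse exponentials; $U_k$ is a product of five copies of $U_{k-1}$ with shifted times, so it has $5^{k-1}$ times as many; and there are $r$ subintervals, giving at most $12Md^2\,5^{k-1}r$ one-sparse exponentials in total, whence $\Noracle\le 12\Costvar Md^2\,5^{k-1}r$ and, with the bound on $r$ above, the claimed~\eqref{eq:Noracle0}. For $N_{\rm T}$ I would order the $6d^2$ colours of a fixed $\alphavar$ consecutively, so that within each $U_1$ --- and at the junctions between successive $U_1$'s, between the five blocks of each $U_\ell$, and between subintervals --- adjacent factors $T_\alphavar T_\alphavar^\dagger$ collapse to the identity, leaving $O(M)$ basis changes per $U_1$ rather than $O(Md^2)$ and hence $N_{\rm T}\le\Noracle/(3\Costvar d^2)$. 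The one genuinely delicate step is the choice of $\tilde\epsilon$: it has to make the estimate of~\cite{WBHS10} applicable, keep the integrator error at or below $\epsilon/2$, and keep $r$ within the stated bound simultaneously; everything else is bookkeeping with the decomposition and the recursion defining $U_k$.
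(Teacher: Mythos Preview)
Your proposal is correct and follows essentially the same route as the paper's own proof: BACS decomposition into $m=6Md^2$ one-sparse terms, propagation of the $\Lambda$-$2k$-smoothness to $6d^2\Lambda$-$2k$-smoothness for the decomposed family, application of~\eqref{eq:rval2}--\eqref{eq:WBHS10condition} with $\epsilon\to\tilde\epsilon/2$, the count $N_{\exp}=2m\,5^{k-1}r$, and the grouping of the $6d^2$ colours under a single $T_\alphavar^\dagger(\cdots)T_\alphavar$ in $U_1$ to get $N_{\rm T}\le N_{\exp}/(3d^2)$. Your treatment is in fact more explicit than the paper's at the ``simplify'' step, where you verify the inequality $2k\le(5/3)^{k+1}$ needed to absorb the factor $(2k(5/3)^{k-1})^{1/2k}$ into the extra $5/3$; the paper simply asserts the simplified form~\eqref{eq:Nexpmod}.
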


\begin{proof}
Our approach is to express $H(t)$ as a sum of~$6Md^2$ one-sparse Hamiltonians and use~\eqref{eq:Ukr} to approximate $U(\initt+\dt,\initt)$ with a sequence of these one-sparse operator exponentials.
The number of oracle calls in the simulation is then obtained by multiplying the number of one-sparse exponentials in our approximation by the number of oracle calls to simulate one-sparse Hamiltonian evolution.
That number is just given as $C$ here, and an upper bound will be placed on it in Lemma~\ref{lem:oraclem}.

Each $d$-sparse Hamiltonian $H_\alphavar$ may be decomposed into a sum of~$6d^2$ one-sparse Hamiltonians $H_{\alphavar,j}$ by using the
BACS decomposition scheme.
As each matrix element of~$H_\alphavar$ is uniquely assigned to a one-sparse Hamiltonian $H_{\alphavar,j}$ in the BACS decomposition~\cite{BACS07}, we have that~$\|H_{\alphavar,j}^{(p)}(t)\|\le \|H_\alphavar^{(p)}(t)\|$ for any non-negative integer $p\le 2k$.
Using Definition~\ref{def:lambdasuzsmooth}, if $\{H_\alphavar\}$ is $\Lambda$-$2k$-smooth, then the set of Hamiltonians $\{H_{\alphavar,j}\}$ is $6d^2\Lambda$-$2k$-smooth.
Using Eq.~\eqref{eq:rval2} and the fact that~$N_{\exp}=2m5^{k-1}r$, if $\epsilon/2\le (9/10)(5/3)^k 6d^2\Lambda\dt$, then
the number of exponentials needed to simulate Hamiltonian evolution, using constant-sized time steps and within error $\epsilon/2$,
is bounded above by
\begin{equation}\label{eq:Nval}
N_{\exp}\le 2m5^{k-1}\left\lceil\frac{2\big[ 2k(5/3)^{k-1}6d^2\Lambdavar\dt\big]^{1+1/2k}}{(\epsilon/2)^{1/2k}}\right\rceil.
\end{equation}
Note that we have replaced $\epsilon$ with $\epsilon/2$, because we require error bounded by $\epsilon/2$ here.
To ensure that condition \eqref{eq:WBHS10condition} holds, we then replace $\epsilon$ by $\tilde\epsilon$,
which ensures that this condition holds and that the error is no greater than $\epsilon/2$.

Using $m=6Md^2$ and simplifying gives
\begin{equation}
	N_{\rm{exp}} \le 12Md^25^{k-1}
		\left\lceil 24kd^2\Lambdavar\dt\left(\frac{5}3\right)^k \left(\frac{6d^2\Lambdavar\dt}{\tilde\epsilon/2}\right)^{1/2k} \right\rceil.\label{eq:Nexpmod}
\end{equation}
The number of oracle queries that are needed in our simulation is
$\Noracle=\Costvar N_{\exp}$, which gives Eq.\ \eqref{eq:Noracle0}.

Finally, we verify the claim that the number of basis transformations is bounded above by $N_{\exp}/(3d^2)$ by counting the number of basis transformations that result from
using the Lie-Trotter-Suzuki formula.  As the BACS decomposition method expresses a \sparse{d} Hamiltonian as a sum of~$6d^2$
one-sparse Hamiltonians, and~$\{H_\alphavar\}$ is \sparse{d}, it follows that the Hamiltonian can be expressed as
\begin{equation}
H(t)=\sum_{\alphavar=1}^M T_\alphavar^{\dagger}\left( \sum_{j=1}^{6d^2} H_{\alphavar,j}(t) \right) T_\alphavar,\label{eq:BACSdecomp}
\end{equation}
where~$H_{\alphavar}(t)=\sum_{j=1}^{6d^2}H_{\alphavar,j}(t)$, and~$\{H_{\alphavar,j}\}$ is one-sparse. We can then use~\eqref{eq:suzuki2} to show that~$U_1(\initt+\tau,\initt)$ becomes
\begin{equation}
\left[\prod_{\alphavar=1}^M T_\alphavar^{\dagger}\left(\prod_{j=1}^{6d^2} \exp(-iH_{\alphavar,j}(\initt+\tau/2) \tau/2) \right) T_\alphavar \right] \left[ \prod_{\alphavar=M}^1 T_\alphavar^{\dagger}\left(\prod_{j=6d^2}^1 \exp(-iH_{\alphavar,j}(\initt+\tau/2) \tau/2) \right)T_\alphavar \right].\label{eq:u1examp}
\end{equation}
Equation \eqref{eq:u1examp} has only $4M$ basis transformations, but $12Md^2$ one-sparse operator exponentials.  Because $U_k$ is a product of~$5^{k-1}$ such approximations, there are at most $3d^2$ basis transformations per one-sparse operator exponential in $U_k$. Because the approximation to~$U$ in our decomposition is a product of~$U_k$~\cite{WBHS10}, \begin{equation}
N_{\rm{T}}\le N_{\exp}/(3d^2),\label{eq:thmbt0}
\end{equation} which implies that $N_{\rm{T}}\le \Noracle/(3\Costvar d^2)$ via this method.
\end{proof}

This Lemma shows how the complexity scales when the Hamiltonian is permitted to be a sum of terms that are each individually sparse in different bases.
This result only considers the error in integrator, and does not consider the contribution of the round-off error that occurs due to discretizing both time and the matrix elements of~$H_\alphavar$. The following Theorem
gives values of the precision that are sufficient to ensure this round-off error is $\epsilon/2$, implying
that the total error is at most $\epsilon$.

\begin{theorem}\label{thm:timedepsim}
If $\{H_\alphavar:\mathbb{R}\mapsto\mathbb{C}^{2^n\times 2^n};\alphavar=1,\ldots,M\}$ is a set of time-dependent Hermitian operators that is \sparse{d} and~$\Lambda$-$2k$-smooth
on~$[\initt,\initt+\dt]$, and~$\numBitsInT$ and~$\numQubitsInH$ satisfy~\eqref{eq:n'eq0}, then for any $\epsilon\in (0,1]$, the evolution generated by $H(t)=\sum_\alphavar T_\alphavar^{\dagger}H_\alphavar(t) T_\alphavar$ can be simulated within
error $\epsilon$, while using a number of oracle queries to~$\BBnoarg$ and~$\fynoarg$, $\Noracle$,  that are bounded above by
\begin{align}
\Noracle&\leq 12 \Costvar Md^2 5^{k-1} \left\lceil 24kd^2\Lambdavar\dt \left(\frac{5}{3} \right)^{k}\left(\frac{6d^2\Lambdavar\dt}{\tilde\epsilon/2}\right)^{1/2k}\right\rceil,
\end{align}
where~$\tilde{\epsilon}:= \min\{\epsilon,18(5/3)^{k-1}d^2\Lambdavar\dt\}$ and the number of queries to~$\{T_\alphavar\}$, $N_{\rm T}$, obeys $N_{\rm T}\le \Noracle/(3Cd^2)$.
\end{theorem}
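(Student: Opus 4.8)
The plan is to obtain Theorem~\ref{thm:timedepsim} by stacking the two preceding results: Lemma~\ref{lem:timedepsim} already bounds the integrator error and supplies the query counts, while Lemma~\ref{lem:adError} bounds the round-off error arising from the finite-bit time mesh and the discretized matrix-element encoding. Write $\widehat U$ for the exact $r$-step constant-step product formula $\prod_{\ell=1}^{r}U_k(\initt+\ell\dt/r,\initt+(\ell-1)\dt/r)$ of Eq.~\eqref{eq:Ukr}, built from the $m=6Md^2$ one-sparse terms of the BACS decomposition, and $\tilde U$ for the operator actually implemented --- the same product of $U_k$'s, but with every one-sparse exponential constructed from the rounded outputs of $\BBnoarg$ and $\fynoarg$ on the discrete time mesh $\{t_q\}$. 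Inserting $\widehat U$ as an intermediate term and using the triangle inequality for the two-norm,
\begin{equation}
\left\|U(\initt+\dt,\initt)-\tilde U\right\|\le\left\|U(\initt+\dt,\initt)-\widehat U\right\|+\left\|\widehat U-\tilde U\right\|,
\end{equation}
so it suffices to bound each term by $\epsilon/2$, after which Eq.~\eqref{eq:normdifference} certifies simulation within trace distance $\epsilon$.

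First I would invoke Lemma~\ref{lem:timedepsim} essentially verbatim. Choosing $r$ as in its proof (Eq.~\eqref{eq:rval2} with $\epsilon$ replaced by $\tilde\epsilon$, so the applicability condition~\eqref{eq:WBHS10condition} is met for the $6d^2\Lambda$-$2k$-smooth decomposed Hamiltonians) makes $\|U(\initt+\dt,\initt)-\widehat U\|\le\epsilon/2$, and the same lemma yields $N_{\exp}\le 12Md^25^{k-1}\left\lceil 24kd^2\Lambda\dt(5/3)^k\left(6d^2\Lambda\dt/(\tilde\epsilon/2)\right)^{1/2k}\right\rceil$, hence $\Noracle=\Costvar N_{\exp}$ and $N_{\rm T}\le N_{\exp}/(3d^2)=\Noracle/(3\Costvar d^2)$. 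The key point is that passing to the one-qubit-per-query discrete oracles does not change these counts: the precision parameters $\numBitsInT,\numQubitsInH$ merely fix how many elementary queries are hidden inside the symbolic constant $\Costvar$ (which is bounded separately in Lemma~\ref{lem:oraclem}), so the displayed bound on $\Noracle$ is literally the one in Lemma~\ref{lem:timedepsim}.

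Next I would bound the round-off term $\|\widehat U-\tilde U\|$ by $\epsilon/2$ using Lemma~\ref{lem:adError} in the single-interval case $\mathcal{I}=\mathcal{I}_1=[\initt,\initt+\dt]$. Its hypotheses all hold here: $\Lambda$-$2k$-smoothness implies the $2$-smoothness it assumes; the quantity $\dt$ in its statement coincides with the present $\dt$; the assumed inequalities~\eqref{eq:n'eq0} are precisely its condition~1 (with, e.g., $\upp=\Lambda$, which is a legitimate upper bound on $\max_{t\in\mathcal{I},\alphavar}\|H_\alphavar(t)\|$ because the $p=0$ instance of Definition~\ref{def:lambdasuzsmooth} gives $\Lambda\ge\sum_\alphavar\|H_\alphavar(t)\|$); and its condition~2, $|\mathcal{I}_j|\ge\dt/2^{\numBitsInT}$, is automatic since the single interval has length $\dt\ge\dt/2^{\numBitsInT}$. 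Hence $\|\widehat U-\tilde U\|\le\epsilon/2$, and adding the two halves gives total error at most $\epsilon$, as claimed.

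The argument is essentially bookkeeping, so there is no deep obstacle; the one point needing genuine care is the consistency of the $\tilde\epsilon$-substitution across the two inputs. One has to check that $\tilde\epsilon=\min\{\epsilon,18(5/3)^{k-1}d^2\Lambda\dt\}$ is simultaneously small enough that $\tilde\epsilon/2\le(9/10)(5/3)^k6d^2\Lambda\dt$ --- so that~\eqref{eq:WBHS10condition} indeed holds for the $6d^2\Lambda$-$2k$-smooth decomposed terms --- and at most $\epsilon$, so that replacing $\epsilon$ by $\tilde\epsilon$ only tightens (never relaxes) the requirement that the integrator error be $\le\epsilon/2$. Both reduce to the elementary identity $(5/3)^{k-1}=(3/5)(5/3)^k$ together with $\min\{\epsilon,\cdot\}\le\epsilon$; everything else is a direct citation of Lemmas~\ref{lem:timedepsim} and~\ref{lem:adError}.
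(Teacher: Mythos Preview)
Your proposal is correct and follows essentially the same approach as the paper: split the total error via the triangle inequality into the integrator contribution (bounded by $\epsilon/2$ via Lemma~\ref{lem:timedepsim}, which also furnishes the stated bounds on $\Noracle$ and $N_{\rm T}$) and the round-off contribution (bounded by $\epsilon/2$ via Lemma~\ref{lem:adError}, whose condition~\eqref{eq:notshort} is automatic on the single interval). Your additional explicit check that the $\tilde\epsilon$-substitution is consistent with~\eqref{eq:WBHS10condition} is a welcome elaboration of a step the paper leaves implicit.
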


\begin{proof}
The error in our simulation scheme arises from two sources: the discretization error, and the error due to the integrator.  By requiring that~$\numBitsInT$ and~$\numQubitsInH$ satisfy~\eqref{eq:n'eq0}, Lemma~\ref{lem:adError} implies that the error due to the discretization is no more than $\epsilon/2$. Note that, because~$\mathcal{I}$ is just a single time interval, the $\dt$ here corresponds to the $\dt$ in Lemma~\ref{lem:adError}, and the condition \eqref{eq:notshort} is automatically satisfied.
Using Lemma \ref{lem:timedepsim}, we find that the simulation can be performed such that the error in the integrator is no greater than $\tilde\epsilon/2$, and the query
complexity of the simulation satisfies the inequalities in Eqs.\ \eqref{eq:Noracle0} and \eqref{eq:thmbt0}. Because $\tilde\epsilon\le \epsilon$, using the triangle inequality shows that the total error is no greater than $\epsilon$.
\end{proof}

In some cases where~$\{H_\alphavar\}$ is not smooth, the Hamiltonian evolution can be simulated more efficiently by deleting a neighborhood from $[\initt,\initt+\dt]$ around each point where the derivatives of~$\{H_\alphavar\}$ diverge, and using the integrator $U_k$ to approximate the time-evolution in the remainder of the interval.
The query complexity for simulating Hamiltonian evolution by this method is given by the following corollary.
\begin{corollary}
\label{cor:piecewise}
Let $\{H_\alphavar:\mathbb{R}\mapsto\mathbb{C}^{2^n\times 2^n};\alphavar=1,\ldots,M\}$ be a set of time-dependent Hermitian operators that is $\Lambdavar$-$2k$-smooth on~$\mathcal{I}=(\initt,\initt+\dt)\setminus\{t_1,\ldots, t_L\}$, where~$\initt<t_1<\cdots<t_L<\initt+\dt$, with the additional conditions
\begin{enumerate}
\item $\exists~H_{\rm{max}}\in \mathbb{R}: H_{\max}\ge\max_{t\in[\initt,\initt+\dt]}\|H(t)\|$,
\item $0<\epsilon\le\min\{1, 27(5/3)^{k-1}d^2\Lambdavar\dt\}$,
\item $\numBitsInT$ and~$\numQubitsInH$ satisfy~\eqref{eq:n'eq0}, and
\item $\dt/2^{\numBitsInT}< \min_{\ell=0,\ldots,L}(t_{\ell+1}-t_\ell)$ with~$t_{L+1}:=\initt+\dt$.
\end{enumerate}
Then the query complexity for simulating evolution generated by $H(t)=\sum_\alphavar T_\alphavar^{\dagger}H_\alphavar(t) T_\alphavar$ within an error of~$\epsilon$ is
\begin{align}
\label{eq:cor3}
\Noracle&\le 12 \Costvar Md^2 5^{k-1}\left[(L+1)+ 24kd^2\Lambdavar \dt\left(\frac{5}{3} \right)^{k}\left(\frac{6d^2\Lambdavar\dt}{(\epsilon/3)}\right)^{1/2k}\right],
\end{align}
where $\Costvar$ is the number of oracle calls needed to simulate a one-sparse Hamiltonian, and $N_{\rm{T}}\le \Noracle/(3\Costvar d^2)$.
\end{corollary}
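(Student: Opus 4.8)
The plan is to partition the simulation interval around the singularities and apply Theorem~\ref{thm:timedepsim} on each remaining piece, then bound the evolution across the deleted neighborhoods directly by their norm. First I would delete a small symmetric neighborhood of width $\delta_\ell$ about each discontinuity $t_\ell$, chosen small enough that it is strictly contained between consecutive discontinuities (this uses condition 4, $\dt/2^{\numBitsInT} < \min_\ell (t_{\ell+1}-t_\ell)$, which guarantees that the time mesh is fine enough to place a grid point on the correct side of each $t_\ell$). This leaves $L+1$ closed subintervals $\mathcal{I}_0,\ldots,\mathcal{I}_L$ on each of which $\{H_\alphavar\}$ is $\Lambdavar$-$2k$-smooth. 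On each such subinterval I apply Theorem~\ref{thm:timedepsim} with error budget a suitable fraction of $\epsilon$; since the $\dt$ appearing in the complexity bounds is an upper bound on the length of any subinterval, and the subintervals together have total length at most $\dt$, the integrator cost per subinterval is bounded by the single-interval expression with $\dt$ unchanged.

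Next I would handle the error contributions. There are three: (i) the integrator error on the $L+1$ good subintervals, (ii) the round-off error, and (iii) the error incurred by replacing the true evolution across each deleted neighborhood by the identity. For (iii), the evolution operator across an interval of length $\delta_\ell$ differs from the identity by at most $H_{\max}\delta_\ell$ in operator norm (this is where condition 1 enters), so by taking the $\delta_\ell$ sufficiently small — in particular taking $\delta_\ell \to 0$, or at least small enough that $\sum_\ell H_{\max}\delta_\ell \le \epsilon/3$ — this contribution is controlled; note that shrinking the $\delta_\ell$ does not increase the query count, it only makes the good subintervals slightly longer, still with total length $\le\dt$. For (ii), conditions 3 (i.e.\ \eqref{eq:n'eq0}) together with Lemma~\ref{lem:adError} give round-off error $\le\epsilon/3$ on the whole collection, using condition 4 to verify \eqref{eq:notshort}. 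For (i), I allocate the remaining $\epsilon/3$ across the good subintervals; by subadditivity of the two-norm over the product decomposition \eqref{eq:Ukr}, the integrator error on subinterval $\mathcal{I}_j$ should be bounded by $(\epsilon/3)\cdot|\mathcal{I}_j|/\dt$, and summing the per-subinterval choices of $r$ from \eqref{eq:rval2} gives, via the same computation as in Lemma~\ref{lem:timedepsim}, a total exponential count bounded by $12Md^2 5^{k-1}[(L+1) + 24kd^2\Lambdavar\dt(5/3)^k(6d^2\Lambdavar\dt/(\epsilon/3))^{1/2k}]$. The $(L+1)$ term is the ceiling overhead: each of the $L+1$ subintervals contributes at least one time step, i.e.\ a $\lceil\cdot\rceil$ that is at least $1$, and summing $L+1$ ceilings of terms whose arguments sum to the single-interval argument gives at most the single-interval $\lceil\cdot\rceil$ plus $L+1$. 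Multiplying by $\Costvar$ gives \eqref{eq:cor3}, and the basis-transformation bound $N_{\rm T}\le\Noracle/(3\Costvar d^2)$ is inherited subinterval-by-subinterval from Lemma~\ref{lem:timedepsim}.

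The condition $\epsilon \le 27(5/3)^{k-1}d^2\Lambdavar\dt$ in condition 2 is the analogue of \eqref{eq:WBHS10condition} / the $\tilde\epsilon$ truncation in Lemma~\ref{lem:timedepsim}: it ensures that when $\epsilon/3$ is further subdivided among the good subintervals, the WBHS10 applicability condition \eqref{eq:WBHS10condition} still holds on each, so we may drop the $\tilde\epsilon$ and write $\epsilon/3$ directly. The main obstacle I anticipate is the bookkeeping in step (i): making rigorous the claim that allocating error proportionally to subinterval length, taking ceilings per subinterval, and then summing, reproduces exactly the single-interval bound plus an additive $(L+1)$ — this requires care because $r$ from \eqref{eq:rval2} is not linear in $\dt$ (it scales as $\dt^{1+1/2k}$) and is not linear in $\epsilon^{-1}$ either, so one must check that the proportional allocation of \emph{both} the length budget and the error budget makes the nonlinear terms sum correctly (they do, because the exponents on $\dt$ and $\epsilon^{-1/2k}$ combine so that $\sum_j (|\mathcal{I}_j|/\dt)^{1+1/2k}(\dt/|\mathcal{I}_j|)^{1/2k} = \sum_j |\mathcal{I}_j|/\dt = 1$ inside the bracket, up to the ceilings). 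A secondary subtlety is verifying that condition 4 genuinely suffices to keep the discretized times on the correct side of each discontinuity even after the neighborhoods are shrunk; this is handled by noting that one only ever needs grid points at the endpoints of the good subintervals, and the mesh spacing $\dt/2^{\numBitsInT}$ being smaller than the inter-discontinuity gaps guarantees such points exist with the right sign.
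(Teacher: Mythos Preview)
Your approach is essentially the paper's: excise $\delta$-neighborhoods of the $t_\ell$, apply the integrator bound on each of the $L+1$ remaining subintervals with error allocated proportionally to subinterval length, bound the omitted pieces via $H_{\max}$, and sum the costs with the additive $(L+1)$ arising as ceiling overhead. Your key observation---that proportional allocation of both length and error makes the nonlinear $(\Lambda|\mathcal I_j|)^{1+1/2k}/\epsilon_j^{1/2k}$ terms collapse to something linear in $|\mathcal I_j|$---is exactly the mechanism in the paper (cf.\ \eqref{eq:piecewise1}), and your reading of condition~2 as the per-subinterval version of \eqref{eq:WBHS10condition} is correct.

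Two corrections. First, your error budget is off: under condition~3, Lemma~\ref{lem:adError} delivers round-off error $\epsilon/2$, not $\epsilon/3$, so a three-way $\epsilon/3$ split overshoots to $7\epsilon/6$. The paper's split is $\epsilon/3$ for the integrator, $\epsilon/6$ for the omitted neighborhoods (hence the fixed choice of $\delta$ in \eqref{eq:defdel} rather than $\delta\to 0$), and $\epsilon/2$ for round-off. Second, on each subinterval you should invoke Lemma~\ref{lem:timedepsim}, not Theorem~\ref{thm:timedepsim}, since you are treating round-off globally via Lemma~\ref{lem:adError}; and because the smoothness domain is the \emph{open} interval $(\initt,\initt+\dt)\setminus\{t_\ell\}$, the paper also trims $\delta$ at the two endpoints $\initt$ and $\initt+\dt$ (whence the $L+2$ in \eqref{eq:defdel}), a point your sketch omits.
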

\begin{proof}
We remove $\delta$-neighborhoods around each~$t_\ell$ and simulate evolution
on the remaining time interval.
We then exploit the fact that~$H(t)$ is bounded to estimate the error incurred by omitting the evolution around those points.

We choose a value of~$\delta$ satisfying
\begin{equation}
\label{eq:defdel}
0<\delta\le\min\left\{ \frac{1}{2}\left[\min_{\ell=0,\ldots,L}(t_{\ell+1}-t_{\ell})-\dt/2^{\numBitsInT}\right], \frac {1}{2H_{\rm{max}}}\log\left(1+\frac{\epsilon/6}{L+2}\right)  \right\},
\end{equation}
where~$t_{L+1}:=\initt+\dt$.
The evolution operator is
\begin{align}
	 U(\initt+\dt,\initt)&=U(t_{L+1},t_{L+1}-\delta)\Biggr[\left(\prod_{\ell=L}^1 U(t_{\ell+1}-\delta,t_{\ell}+\delta)U(t_{\ell}+\delta,t_{\ell}-\delta)\right)
\times U(t_1-\delta,t_0+\delta)
	U(t_0+\delta,t_0)\label{eq:Udecomp2}\Biggr]\\
	&\approx \prod_{\ell=L}^0 U(t_{\ell+1}-\delta,t_{\ell}+\delta).\label{eq:Udecomp3}
\end{align}
For~$\epsilon\le 27(5/3)^{k-1}d^2\Lambdavar\dt$,
\begin{equation}
	2\epsilon(t_{\ell+1}-t_\ell-2\delta)/(3\dt)\le 18(5/3)^{k-1}d^2\Lambdavar(t_{\ell+1}-t_\ell-2\delta).
\end{equation}
With this restriction, using Lemma~\ref{lem:timedepsim} each of the $L+1$ evolutions in~\eqref{eq:Udecomp3} can be simulated with integrator error bounded above by $\epsilon(t_{\ell+1}-t_\ell-2\delta)/(3\dt)$ using no more than
\begin{equation}
	12 \Costvar Md^2 5^{k-1}\left\lceil 24kd^2\Lambdavar(t_{\ell+1}-t_{\ell}-2\delta)
	\left(\frac{5}{3} \right)^{k}\left(\frac{6d^2\Lambdavar\dt}{(\epsilon/3)}\right)^{1/2k}\right\rceil\label{eq:piecewise1}
\end{equation}
oracle queries.
Using Eq.~(4.69) of Ref.\ \cite{NC00}, which states that for unitary operators,
\begin{equation}
\label{eq:mike}
	\left\|\prod_j U_j - \prod_k V_k\right\|\le \sum_j \|U_j-V_j\|,
\end{equation}
the total error is bounded above by
\begin{equation}
\sum_{\ell} \epsilon(t_{\ell+1}-t_\ell-2\delta)/(3\dt)<\epsilon/3.
\end{equation}
Then, summing~\eqref{eq:piecewise1} over $\ell$ gives inequality \eqref{eq:cor3} as an upper bound for the number
of oracle queries made. The additional factor of~$(L+1)$ in Eq.~\eqref{eq:cor3} is to take account of the ceiling function in Eq.~\eqref{eq:piecewise1}.
The bound for the number of basis transformations is obtained by using~$N_{\rm T}=N_{\exp}/3d^2$, and by summing over all $L+1$ subintervals.

To bound the overall error, we need to bound the error due to approximating~\eqref{eq:Udecomp2} with~\eqref{eq:Udecomp3}.
Because $H(t)$ is bounded on~$[\initt,\initt+\dt]$, and~$H_{\rm{max}} \ge \max_{t\in[\initt,\initt+\dt]}\|H(t)\|$, the unitary evolution over each~$t_\ell$ for $\ell=1,\ldots,L$ satisfies
\begin{equation}
	\left\| U(t_\ell+\delta,t_{\ell}-\delta)-\openone \right\|\le e^{2H_{\rm{max}}\delta}-1.\label{eq:deltabd1}
\end{equation}
For $\ell=0$ and~$L+1$ we have
\begin{align}
	\left\| U(t_0+\delta,t_0)-\openone \right\| &\le e^{H_{\rm{max}}\delta}-1, \nn
	\left\| U(t_{L+1},t_{L+1}-\delta)-\openone \right\| &\le e^{H_{\rm{max}}\delta}-1.
\end{align}

These errors can be made suitably small by using the restriction on~$\delta$ specified in Eq.~\eqref{eq:defdel}.
The first expression in the braces in \eqref{eq:defdel} ensures that the inequality $\delta<\min_{\ell=0,\ldots,L}(t_{\ell+1}-t_{\ell})/2$ is satisfied.
The second ensures that the error in approximating the evolution about each of the $t_\ell$ by $\openone$ is bounded above by $\epsilon/[6(L+2)]$.
Using Eq.~\eqref{eq:mike}, the total error in approximating~\eqref{eq:Udecomp2} with~\eqref{eq:Udecomp3} is bounded above by $\epsilon/6$.
Combining this with the bound on the error of~$\epsilon/3$ for the integrators over the time intervals $[t_{\ell}+\delta,t_{\ell+1}-\delta]$, the total error due to omitting the $\delta$-neighborhoods from the
simulation and using the Lie-Trotter-Suzuki formula is bounded above by $\epsilon/2$.

We now use Lemma~\ref{lem:adError} to ensure that the roundoff error is also bounded above by $\epsilon/2$.
The definition of~$\dt$ used in that Lemma gives the $\dt$ used in this corollary, and so may be used in the restrictions without change.
The restriction~$\dt/2^{\numBitsInT}< \min_{\ell}(t_{\ell+1}-t_\ell)$ and the choice of~$\delta$ in~\eqref{eq:defdel} ensure that the restriction \eqref{eq:notshort} of Lemma~\ref{lem:adError} holds.
We have also required that~$\numBitsInT$ and~$\numQubitsInH$ satisfy~\eqref{eq:n'eq0}, so all conditions required for Lemma~\ref{lem:adError} hold, and the round-off error may be bounded by $\epsilon/2$. As the error in the integrator has also been bounded by $\epsilon/2$, the total error is no greater than $\epsilon$.
\end{proof}

We have shown in this section that time-dependent Hamiltonian evolutions can be simulated by using the product formula approach to simulation, even if there are discontinuities in the Hamiltonian.
If the Hamiltonian is sufficiently smooth, then these simulations achieve the same near-linear scaling as the BACS
simulation achieves.
In the next section we improve upon these results by presenting a method that uses adaptive time steps.

\section{Adaptive Decomposition Scheme}
\label{sec:adaptive}
We saw in the previous section that if we use time steps that have constant size in our simulation algorithm, then the number of oracle calls used to simulate Hamiltonian evolution depends on the largest values of the norms of $H(t)$ and its derivatives.
For Hamiltonians whose time-dependence is sharply peaked, the maximum values of these quantities can be quite large relative to their time-averages.
It is natural to ask if the complexity can be made to depend on the average values, rather than the maximum values, by using adaptive time steps.
In this section we show that this is indeed possible.

As discussed in Sec.\ \ref{sec:metad}, this result is nontrivial because of the interdependence of the different parameters.
We choose a sequence of times $\{t_p\}$ such that~$\initt<t_1<\cdots<t_r=\initt+\dt$.
These times are selected such that the error from using~$U_k$ is bounded above by $\epsilon/(2r)$ for each interval.
Given a function $\Upsilonfcn$, such that~$\{H_\alphavar\}$ is $\Upsilonfcn$-$2k$-pointwise-smooth on~$[\initt,\initt+\dt]$, the size of the interval will be inversely proportional to the maximum value of $\Upsilonfcn(t)$ in that interval.
The exact result is given in the following Lemma.
\begin{lemma}
\label{lem:inter}
Let $\{H_\alphavar:\alphavar=1,\ldots M\}$, where~$H_{\alphavar}:\mathbb{R}\mapsto\mathbb{C}^{2^n\times 2^n}$, be a set of time-dependent Hermitian operators that is $\Upsilonfcn$-$2k$-pointwise-smooth and \sparse{d} on~$[\initt,\initt+\dt]$, and~$\{t_1,\ldots,t_r\}$ be a set of~$r$ moments in time such that~$\initt<t_1<\cdots<t_r=\initt+\dt$. If
\begin{equation}
\label{eq:varstep}
	 \left(\max_{t\in[t_p,t_{p+1}]}\Upsilonfcn(t)\right)(t_{p+1}-t_p)\leq \frac{(\epsilon/r)^{1/(2k+1)}}{24d^2k(5/3)^{k-1}},
\end{equation}
where~$\epsilon\in(0,1]$, then
\begin{equation}
\label{eq:udif3}
\left\|U(\initt+\dt,\initt)-\prod_{p=1}^rU_k(t_{p},t_{p-1})\right\|\le \epsilon/2.
\end{equation}
\end{lemma}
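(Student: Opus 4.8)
The plan is to bound the error by a sum of local errors, one per subinterval, using the standard telescoping identity for products of unitaries, and then show that the hypothesis \eqref{eq:varstep} forces each local error to be at most $\epsilon/(2r)$. First I would invoke Eq.~\eqref{eq:mike} (the inequality $\|\prod_j U_j - \prod_j V_j\| \le \sum_j \|U_j - V_j\|$, valid for unitaries) with $U_j = U(t_p,t_{p-1})$ and $V_j = U_k(t_p,t_{p-1})$, so that it suffices to prove $\|U(t_p,t_{p-1}) - U_k(t_p,t_{p-1})\| \le \epsilon/(2r)$ for each $p$. This reduces the problem to a single short time step.

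For a single step, the key input is the error bound underlying Eqs.~\eqref{eq:rval2}--\eqref{eq:WBHS10condition} (equivalently Lemma~5 of \cite{WBHS10}), which I would apply with $r=1$ and with the local quantities: on $[t_{p-1},t_p]$ the set $\{H_{\alphavar,j}\}$ is $6d^2 \Lambda_p$-$2k$-smooth where $\Lambda_p := \max_{t\in[t_{p-1},t_p]}\Upsilonfcn(t)$, since each one-sparse piece obeys $\|H_{\alphavar,j}^{(p)}\|\le\|H_\alphavar^{(p)}\|$ and the pointwise-smoothness bound gives $\sum_{\alphavar,j}\|H_{\alphavar,j}^{(p)}(t)\| \le 6d^2 [\Upsilonfcn(t)]^{p+1} \le 6d^2 \Lambda_p^{p+1}$ on the step. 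The step-$r=1$ version of the bound says that the integrator error over $[t_{p-1},t_p]$ is controlled once $\big(2k(5/3)^{k-1} 6d^2 \Lambda_p (t_p - t_{p-1})\big)^{2k+1} \le \delta$ for the desired local tolerance $\delta$; tracing the constants, the local error is at most $\delta$ provided $6d^2\Lambda_p(t_p-t_{p-1}) \le \delta^{1/(2k+1)}/\big(2k(5/3)^{k-1}\big)$, up to the factor built into \eqref{eq:rval2}. Setting $\delta = \epsilon/(2r)$ and rearranging, the required condition becomes $\Lambda_p(t_p-t_{p-1}) \le (\epsilon/r)^{1/(2k+1)}/\big(c\, d^2 k (5/3)^{k-1}\big)$ for an absolute constant $c$; matching the $2^{1/(2k+1)}$ absorbed from the $\epsilon/(2r)$ against the stated constant $24$ in \eqref{eq:varstep}, this is exactly the hypothesis. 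Summing the $r$ local errors of size $\epsilon/(2r)$ then yields the claimed bound $\epsilon/2$.

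The main obstacle I anticipate is bookkeeping the constants carefully enough to land on precisely $24d^2k(5/3)^{k-1}$ in the denominator of \eqref{eq:varstep}: one must correctly combine (i) the factor $6d^2$ from the BACS decomposition, (ii) the factors in \eqref{eq:rval2}--\eqref{eq:WBHS10condition} coming from the Suzuki error constant $(5/3)^{k}$ versus $(5/3)^{k-1}$ and the leading $2$ and $2k$, and (iii) the $2^{1/(2k+1)} \le 2$ slack from replacing $\epsilon/(2r)$ by $\epsilon/r$ at the cost of a constant, while simultaneously checking that the side condition \eqref{eq:WBHS10condition} (namely $\epsilon/(2r) \le (9/10)(5/3)^k 6d^2\Lambda_p(t_p-t_{p-1})$ in local form) is automatically implied by \eqref{eq:varstep} together with $\epsilon \le 1$ — so that the error bound \eqref{eq:rval2} is actually applicable on every step. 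A secondary, purely routine point is verifying that the hypothesis indeed makes each step ``short'' in the sense required for the single-step ($r=1$) specialization of the WBHS bound to be valid; this follows since the right-hand side of \eqref{eq:varstep} is at most a fixed constant and $\Upsilonfcn(t)(t_{p+1}-t_p)$ being small is precisely the short-time regime.
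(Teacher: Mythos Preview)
Your overall strategy is exactly the paper's: telescope via Eq.~\eqref{eq:mike}, pass to a single short step, replace $\Lambda$ by $6d^2\max_{t\in[t_{p-1},t_p]}\Upsilonfcn(t)$ using the BACS one-sparse decomposition, and check that \eqref{eq:varstep} makes each local error at most $\epsilon/(2r)$.

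There is one misstep in the bookkeeping. You propose to invoke the \emph{multi-step} result (Lemma~5 of \cite{WBHS10}, equivalently Eqs.~\eqref{eq:rval2}--\eqref{eq:WBHS10condition}) specialized to $r=1$, and then to verify the side condition \eqref{eq:WBHS10condition} in local form, namely $\epsilon/(2r)\le (9/10)(5/3)^k\,6d^2\Lambda_p\dt_p$. That inequality is \emph{not} implied by \eqref{eq:varstep}: the hypothesis bounds $\Lambda_p\dt_p$ from above, whereas \eqref{eq:WBHS10condition} asks for a lower bound on it (and indeed fails outright if $\Lambda_p\dt_p$ happens to be very small). The paper avoids this by citing the \emph{single-step} error bound directly (Theorem~3 together with Eq.~(A.3) of \cite{WBHS10}):
\[
\|U(t_p,t_{p-1})-U_k(t_p,t_{p-1})\|\le 2\bigl[2k(5/3)^{k-1}\Lambda\dt_p\bigr]^{2k+1},
\]
valid under the short-time condition $4k\sqrt{2}(5/3)^{k-1}\Lambda\dt_p\le 3/2$. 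This side condition goes in the same direction as \eqref{eq:varstep} and is immediately implied by it together with $\epsilon\le 1$, $r\ge 1$. With $\Lambda=6d^2\Lambda_p$, substituting \eqref{eq:varstep} into the display above gives the local error $\le 2(\epsilon/r)/2^{2k+1}\le \epsilon/(2r)$, and the constant $24=2\times 12$ falls out cleanly without the $2^{1/(2k+1)}$ juggling you anticipated. So: right plan, wrong WBHS10 citation; swap Lemma~5 for Theorem~3 and the obstacle you flagged disappears.
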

\begin{proof}
Using Theorem 3 and Eq.~(A.3) of Ref.\ \onlinecite{WBHS10}, the error in each interval is bound above by
\begin{equation}
\|U(t_p+\dt_p,t_p)-U_k(t_p+\dt_p,t_p) \|\le2[2k(5/3)^{k-1}\Lambda\Delta t_p]^{2k+1},\label{eq:suzukierrbd}
\end{equation}
where~$\dt_p:=t_{p+1}-t_p$, provided that
\begin{equation}4k\sqrt{2}(5/3)^{k-1}\Lambda\dt_p\le 3/2,\label{eq:thm3condition}\end{equation}
for a set of operators that is $\Lambda$-$2k$-smooth.
As $\Lambda$ is bounded above by $6d^2\max_{t\in[t_p,t_{p+1}]}\Upsilonfcn(t)$, the upper bound in~\eqref{eq:suzukierrbd} becomes
\begin{equation}
\label{eq:udif}
\|U(t_p+\dt_p,t_p)-U_k(t_p+\dt_p,t_p) \|\le2\left[12d^2k(5/3)^{k-1}\max_{t\in[t_p,t_{p+1}]}\Upsilonfcn(t)\Delta t_p\right]^{2k+1}.
\end{equation}
Because $\epsilon\le 1$ and~$r\ge 1$, the condition \eqref{eq:varstep} implies that~\eqref{eq:thm3condition} is satisfied, and therefore that Eq.~\eqref{eq:udif} holds.

Using Eq.~\eqref{eq:varstep} in Eq.~\eqref{eq:udif} gives
\begin{equation}
\label{eq:udif2}
\|U(t_p+\dt_p,t_p)-U_k(t_p+\dt_p,t_p) \|\le \epsilon/(2r).
\end{equation}
Then, using Eq.~\eqref{eq:mike} the error due to using~$r$ Lie-Trotter-Suzuki integrators is at most $\epsilon/2$, hence yielding~\eqref{eq:udif3}.
\end{proof}

There are two challenges in using this result as a method for choosing the time steps.
First, Eq.\ \eqref{eq:varstep} depends on the maximum value of $\Upsilonfcn(t)$ over the interval, which may be difficult to find, particularly because it depends on the size of the time interval.
Ideally, it should depend only on the value at time $t_p$, in order to give a simple method of determining the time interval.
Second, Eq.\ \eqref{eq:varstep} depends on the number of time steps $r$, which is not known in advance.
It is only known once all time intervals have been determined, but we wish to determine these via Eq.\ \eqref{eq:varstep}.

To break these interdependencies, we first need a bound on the derivative of $\Upsilonfcn(t)$, so we can bound the value of $\Upsilonfcn(t)$ on the interval by the value at time $t_p$.
It is not obvious what bound should be taken, but it can be shown that if $\Upsilonfcn_P$ is the smallest possible function such that~$\{H_\alphavar\}$ is $\Upsilonfcn$-$P$-pointwise-smooth, then it will satisfy $|\partial_t\Upsilonfcn_P(t)| \le [\Upsilonfcn_{P+1}(t)]^2$ (see Appendix \ref{appendix:derivbd}).
In many cases we can expect that there exists a constant $\scalconst$ such that $\Upsilonfcn_{P+1}(t)\le \scalconst\Upsilonfcn_{P}(t)$.
This motivates taking the condition $|\partial_t\Upsilonfcn(t)| \le \scalconst^2[\Upsilonfcn(t)]^2$.

Second, we provide a quantity $r_g$, given in Eq.\ \eqref{eq:rdef}, that is an upper bound on the number of time steps.
If the time steps are chosen according to Eq.\ \eqref{eq:varstep} with $r_g$ instead of $r$, then Eq.\ \eqref{eq:varstep} must still hold.
Using this approach, we obtain the following Theorem.

\begin{theorem}\label{thm:adaptiveResult2}
If $\{H_\alphavar:\mathbb{R}\mapsto\mathbb{C}^{2^n\times 2^n};\alphavar=1,\ldots,M\}$ is a set of time-dependent Hermitian operators that is $\Upsilonfcn$-$2k$-pointwise-smooth, $d$-sparse on~$\mathcal{I}=[\initt,\initt+\dt]$, $\numBitsInT$,  and~$\numQubitsInH$ satisfy~\eqref{eq:n'eq0}, and there exists $\scalconst\in\mathbb{R}$ such that
$|\partial_t\Upsilonfcn(t)| \le \scalconst^2[\Upsilonfcn(t)]^2$ $\forall~t\in\mathcal{I}$, then for any $\epsilon\in (0,1]$, the evolution generated by $H(t)=\sum_\alphavar T_\alphavar^{\dagger}H_\alphavar(t) T_\alphavar$
can be simulated within error $\epsilon$, with 
\begin{equation}
\Noracle \le 12\Costvar Md^25^{k-1}\left\lceil\frac{\left[24d^2k(5/3)^{k-1}\overline\interval(\initt+\dt,\initt)\dt \right]^{1+1/2k}} {(\epsilon/4)^{1/2k}} +3\scalconst^2\overline\interval (\initt+\dt,\initt)\dt+1\right\rceil\label{eq:Noracleresult2},
\end{equation}
where~$\log^*$ is the iterated logarithm function,~$\upp$ is given in Lemma~\ref{lem:adError}, $\Costvar$ is the number of oracle calls needed to simulate a one-sparse Hamiltonian,
and $N_{\rm T}\le \Noracle/(3\Costvar d^2)$.
\end{theorem}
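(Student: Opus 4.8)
The plan is to reduce the adaptive case to the constant-step machinery of Lemma \ref{lem:timedepsim} and Lemma \ref{lem:inter}, with the main work being the construction of a provable upper bound $r_g$ on the number of time steps that simultaneously makes the step-selection rule self-consistent. First I would split the total error budget: as in Theorem \ref{thm:timedepsim}, the roundoff error is bounded by $\epsilon/2$ by invoking Lemma \ref{lem:adError} under the stated hypotheses on $\numBitsInT,\numQubitsInH$ (noting $\mathcal{I}$ is a single interval so condition \eqref{eq:notshort} is automatic), leaving $\epsilon/2$ for the integrator error, which is exactly the budget handled by Lemma \ref{lem:inter}. So the remaining task is purely about counting how many subintervals the adaptive rule \eqref{eq:varstep} produces.

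The key device is to pick $r_g$ \emph{a priori} and then define the times recursively by a rule of the form $t_{p+1}-t_p \propto (\epsilon/r_g)^{1/(2k+1)}/(d^2 k (5/3)^{k-1}\Upsilonfcn(t_p))$ — this is the increment inversely proportional to $\Upsilonfcn(t_p)$ promised in Sec.\ \ref{sec:metad}, the recurrence \eqref{eq:recur}. I would then use the hypothesis $|\partial_t\Upsilonfcn(t)|\le\scalconst^2[\Upsilonfcn(t)]^2$ to show that over such a short step $\Upsilonfcn$ does not grow by more than a controlled constant factor (a Grönwall-type estimate: $\Upsilonfcn(t)\le\Upsilonfcn(t_p)/(1-\scalconst^2\Upsilonfcn(t_p)(t-t_p))$), so that $\max_{t\in[t_p,t_{p+1}]}\Upsilonfcn(t)(t_{p+1}-t_p)$ is still bounded by the right-hand side of \eqref{eq:varstep} with $r=r_g$, hence Lemma \ref{lem:inter} applies provided the actual number of steps $r$ satisfies $r\le r_g$. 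To close the loop I would bound the number of steps: each step advances time by at least an amount proportional to $(\epsilon/r_g)^{1/(2k+1)}/\Upsilonfcn(t_p)$, so summing the reciprocals and comparing the sum $\sum_p (t_{p+1}-t_p)\Upsilonfcn(t_p)$ to the integral $\int\Upsilonfcn(t)\,dt = \overline\interval(\initt+\dt,\initt)\dt$ (again using the derivative bound to control the discretization of the integral, which is where the additive $3\scalconst^2\overline\interval\dt$ term comes from) gives $r \le [24d^2k(5/3)^{k-1}\overline\interval\dt]^{1+1/2k}/(\epsilon/4)^{1/2k} + 3\scalconst^2\overline\interval\dt + 1 =: r_g$. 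Choosing $r_g$ to be exactly this value makes the inequality $r\le r_g$ hold, so the construction is consistent; the extra factor of $4$ versus $2$ in the denominator absorbs the constant-factor slack from the Grönwall step and the integral-vs-sum comparison.

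Finally I would convert the step count into the query count. Each $U_k(t_p,t_{p-1})$ over a subinterval is simulated exactly as in Lemma \ref{lem:timedepsim}: it is built from $U_1$, which after the BACS decomposition of each $H_\alphavar$ into $6d^2$ one-sparse pieces and the $5^{k-1}$-fold recursion contains $12Md^2 5^{k-1}$ one-sparse exponentials (and the basis-transformation count of $N_{\rm T}\le N_{\exp}/(3d^2)$ carries over verbatim from that lemma's final paragraph). Multiplying $12 Md^2 5^{k-1}$ by $r_g$ and by $\Costvar$ (the one-sparse simulation cost) gives \eqref{eq:Noracleresult2}; the ceiling in \eqref{eq:Noracleresult2} wraps $r_g$ because $r$ is an integer and we need $r\le\lceil r_g\rceil$. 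The main obstacle is the self-consistency argument in the second paragraph: one must verify that plugging $r_g$ into the step rule genuinely yields at most $r_g$ steps, and this requires the derivative bound to be strong enough that the local growth of $\Upsilonfcn$ over one step is a bounded multiplicative factor — getting the constants to line up so that the factor-$4$ (rather than factor-$2$) budget suffices is the delicate part, and the piecewise/Grönwall estimate controlling $\int\Upsilonfcn$ versus $\sum(t_{p+1}-t_p)\Upsilonfcn(t_p)$ is where the $\scalconst^2\overline\interval\dt$ correction must be tracked carefully.
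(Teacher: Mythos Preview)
Your proposal is correct and follows essentially the same approach as the paper: split the error budget via Lemma~\ref{lem:adError}, define the recurrence \eqref{eq:recur} with step size $\propto 1/\Upsilonfcn(t_p)$ (the paper includes an explicit $+\scalconst^2$ in the denominator alongside $\upb$ to make the Gr\"onwall bound close exactly), use the two-sided bound \eqref{eq:lbnds} from $|\Upsilonfcn'|\le\scalconst^2\Upsilonfcn^2$ to verify \eqref{eq:varstep} and to compare $\sum_p\Upsilonfcn(t_p)\dt_p$ to $\int\Upsilonfcn$, and then multiply $r_g$ by $12\Costvar Md^25^{k-1}$. The one step you correctly flag as delicate but do not spell out---showing the self-consistency $r\le r_g$---is handled in the paper by writing the bound as $r\le\lceil A^{2k/(2k+1)}r_g^{1/(2k+1)}+2\scalconst^2\overline\interval\dt+1/3\rceil$ and then applying the concavity inequality $(1+x)^{1/(2k+1)}\le 1+x/(2k+1)$ to collapse this back to $r_g$.
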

\begin{proof}
For this proof, we choose a set of~$r$ times $\{t_1,\ldots,t_r\}$, such that~$\initt<t_1<\cdots<t_r=\initt+\dt$.
We define
\begin{equation}
\Avar := \frac{[24d^2k(5/3)^{k-1}\overline\interval (\initt+\dt,\initt)\dt]^{1+1/2k}}{\epsilon^{1/2k}},\label{eq:Adef}
\end{equation}
and
\begin{equation}
 r_g := \lceil \Avar+3\scalconst^2\overline\interval (\initt+\dt,\initt)\dt+1 \rceil. \label{eq:rdef}
\end{equation}
We choose the times via the recurrence relation, for $p=0,\ldots,r-1$,
\begin{equation}
\label{eq:recur}
t_{p+1} = t_p + \frac 1{\Upsilonfcn(t_p)} \frac 1{\upb+\scalconst^2},
\end{equation}
where
\begin{equation}
\upb := \frac{24d^2k(5/3)^{k-1}}{(\epsilon/r_g)^{1/(2k+1)}}\label{eq:varstep2}.
\end{equation}
We will show shortly that~\eqref{eq:recur} implies that~\eqref{eq:varstep} is held, thereby guaranteeing
that the integrator error is bounded above by $\epsilon/2$.  In order to prove this we first need to
find an upper bound for $\Upsilonfcn(t+\delta t)$, for small values of~$\delta t>0$, in terms of~$\Upsilonfcn(t)$.

 We find this bound by solving the differential equations given by $\Upsilonfcn$ saturating the inequality $|\partial_t\Upsilonfcn(t)|\le \scalconst^2[\Upsilonfcn(t)]^2$.
In particular, for $\delta t\ge 0$, we have
\begin{equation}
\label{eq:lbnds}
\frac {\Upsilonfcn(t)}{1+\scalconst^2\Upsilonfcn(t)\delta t}\le
\Upsilonfcn(t+\delta t) \le \frac {\Upsilonfcn(t)}{1-\scalconst^2\Upsilonfcn(t)\delta t}.
\end{equation}
By using the inequality on the left, and the fact that the minimum of a function is less than or equal to its average value, we find
\begin{align}
	\overline\interval(t_{p+1},t_p) &\ge \frac{\Upsilonfcn(t_p)}{1+\scalconst^2\Upsilonfcn(t_p)\dt_p}.
\label{eq:lbbnd}
\end{align}
This, together with recurrence relation \eqref{eq:recur}, yields
\begin{equation}
	1 \le \overline\interval(t_{p+1},t_p)\dt_p \left( \upb + 2\scalconst^2\right).\label{eq:adaptive1}
\end{equation}
Summing over the first $r-1$ subintervals then gives
\begin{align}
r-1 &\le \sum_{p=0}^{r-2}\overline\interval(t_{p+1},t_p)\dt_p \left( \upb + 2\scalconst^2\right)\nonumber\\
&=(t_{r-1}-\initt)\overline\interval(t_{r-1},\initt)(Y+2K^2).
\end{align}
By solving the above equation for $r$, and using the fact that~$(t_{r-1}-\initt)\overline\interval(t_{r-1},\initt)\le\dt\overline\interval(\initt+\dt,\initt)$, we
have that
\begin{align}
r &\le\dt\overline\interval(\initt+\dt,\initt) \left( \upb + 2\scalconst^2\right)+1.\label{eq:adaptive2}
\end{align}
As $r$ is an integer, it is also bounded above by
\begin{equation}
r \le\left\lfloor\dt\overline\interval(\initt+\dt,\initt) \left( \upb + 2\scalconst^2\right)+1\right\rfloor.\label{eq:adaptive2p5}
\end{equation}
This also implies that for any $\gamma>0$,
\begin{equation}
r\le\left\lceil \dt\overline\interval(\initt+\dt,\initt) \left( \upb + 2\scalconst^2\right) +\gamma\right\rceil.\label{eq:adaptive3}
\end{equation}
If we take~$\gamma=1/3$ in~\eqref{eq:adaptive3} and use the definitions of~$A$ and~$Y$, we find
\begin{align}
r &\le \left\lceil A^{2k/(2k+1)}r_g^{1/(2k+1)} + 2\scalconst^2\overline\interval(\initt+\dt,\initt)\dt+1/3\right\rceil \nn
&= \left\lceil A[1+(r_g/A-1)]^{1/(2k+1)} + 2\scalconst^2\overline\interval(\initt+\dt,\initt)\dt+1/3\right\rceil \nn
&\le \left\lceil A+\frac{1}{2k+1}(r_g-A) + 2\scalconst^2\overline\interval(\initt+\dt,\initt)\dt+1/3\right\rceil \nn
&\le \left\lceil A+\frac{1}{3}(3\scalconst^2\overline\interval(\initt+\dt,\initt)\dt+2) + 2\scalconst^2\overline\interval(\initt+\dt,\initt)\dt+1/3\right\rceil \nn
&\le \left\lceil A+ 3\scalconst^2\overline\interval(\initt+\dt,\initt)\dt+1\right\rceil = r_g.
\end{align}

The inequality on the right of Eq.~\eqref{eq:lbnds} gives
\begin{equation}
\max_{t\in[t_p,t_{p+1}]}\Upsilonfcn(t) \le \frac {\Upsilonfcn(t_p)}{1-\scalconst^2\Upsilonfcn(t_p)\dt_p}.
\end{equation}
This, together with the recurrence relation \eqref{eq:recur}, gives
\begin{equation}
\max_{t\in[t_p,t_{p+1}]}\Upsilonfcn(t)\dt_p \le 1/\upb.
\end{equation}
Because $r\le r_g$, the condition \eqref{eq:varstep} of Lemma \ref{lem:inter} is satisfied, and the error from the integrator is no more than $\epsilon/2$.
This implies that the total error is bounded above by $\epsilon$ if $\numBitsInT$ and~$\numQubitsInH$ are chosen as in Lemma~\ref{lem:adError}.

We can place an upper bound on the number of exponentials used in the approximation by multiplying $r$ by the number of exponentials in each \Th{k}-order Lie-Trotter-Suzuki approximation.
We use $r_g$ as an upper bound for $r$, note that there are $12Md^25^{k-1}$ one-sparse operator exponentials in each of the \Th{k}-order Lie-Trotter Suzuki approximations, and use $\Noracle=\Costvar N_{\exp}$, giving Eq.\ \eqref{eq:Noracleresult2}.
The bound on $N_{\rm T}$ follows from the fact that $\Noracle=\Costvar N_{\exp}$, and $N_{\rm T}\le N_{\exp}/3d^2$.
\end{proof}

The result of Theorem~\ref{thm:adaptiveResult2} shows that adaptive time steps can be used to dramatically reduce the complexity of simulating certain time-dependent Hamiltonian evolutions.  This improvement stems from the fact that the performance of the constant time step algorithm depends on the largest possible value of $\Upsilonfcn$, whereas the performance of the adaptive version scales with the average value of $\Upsilonfcn$.  In some cases, such as an example in the following section, the average can be much smaller than the largest value of $\Upsilonfcn$, leading to a substantial difference in performance.

\section{Examples}
We now examine the performance of our adaptive time step decomposition method for a pair of examples.
In the first example, we examine a Hamiltonian that is a Gaussian approximation to a Dirac-delta function.
In our second example, we examine a non-analytic Hamiltonian, and show that we can substantially reduce the number of exponentials used in some simulations by choosing~$k$ adaptively as well.

\begin{figure}[t!]
\begin{minipage}[t]{0.4\linewidth}

\includegraphics[scale=0.7]{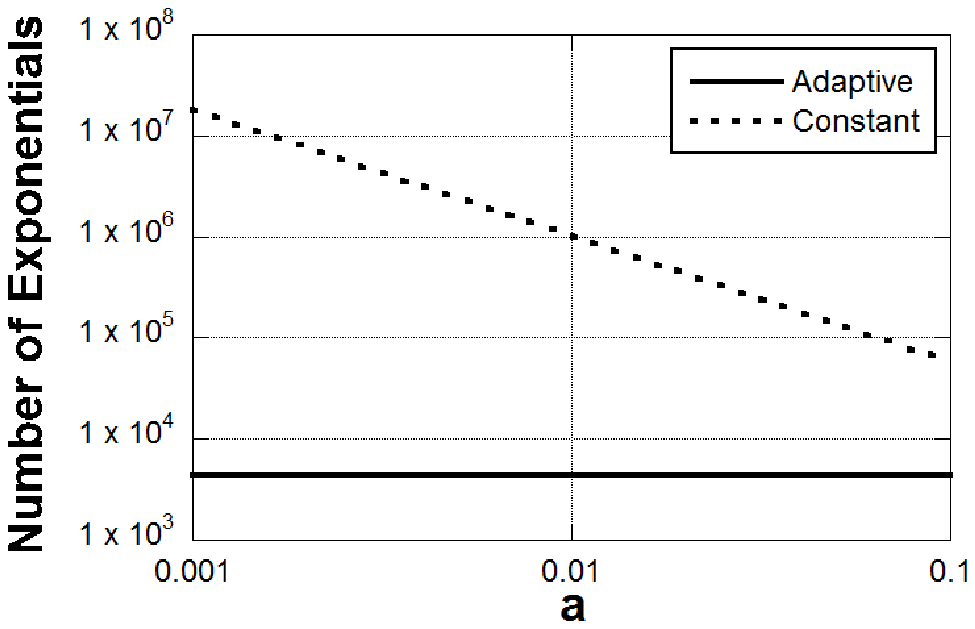}
\caption{This figure shows a comparison between the number of exponentials used in
 our adaptive and constant step size methods for $H(t)=\exp(-(t-1)^2/a^2)/(a\sqrt{\pi})\openone$, $t\in[0,2]$,
$\epsilon=10^{-4}$ and~$k=2$.}
\label{fig:adaptivevsnon}
\end{minipage}%
\hspace{1cm}
\begin{minipage}[t]{0.4\linewidth}

\includegraphics[scale=0.7]{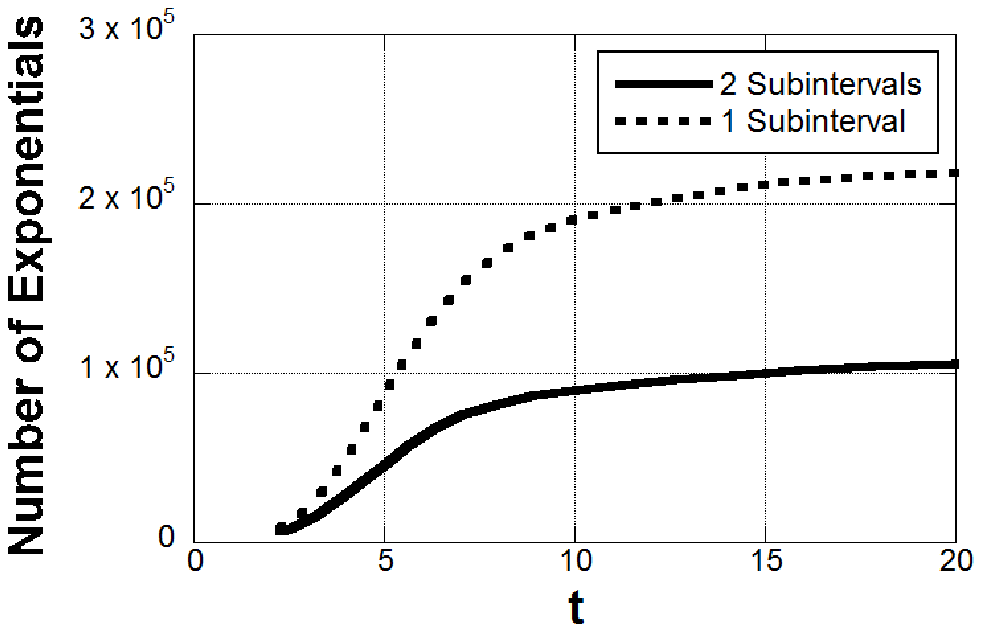}
\caption{This plot shows the number of exponentials found by choosing~$k$ adaptively for $H(t)=t^5\sin(1/t)\exp(-t)\openone$, and~$\epsilon=10^{-4}$ for one and two subintervals with~$k=1$ and~$k=2$ respectively.}
\label{fig:2layer}
\end{minipage}
\end{figure}
The Hamiltonian that we choose in our first example is, for $a>0$,
\begin{equation}
H(t)=\exp(-(t-1)^2/a^2)/(a\sqrt{\pi})\openone.
\end{equation}
We compare the cost by plotting $N_{\exp}$ as a function of~$a$ for both methods.
We choose our approximations to be $2^{\textup{nd}}$-order Lie-Trotter-Suzuki formulae~$(k=2)$ with~$\epsilon=10^{-3}$.
For our adaptive method, rather than choosing the time steps using the upper bound on $r_g$ as in Theorem \ref{thm:adaptiveResult2}, we use an iterative process to find the number of steps.
That is, we choose each step using \eqref{eq:varstep} with an initial guess of $r=r_g$, then find a sequence of steps via \eqref{eq:varstep}, and then count the number of steps to find a new guess for $r$.
Iterating this process gives the solution for $r$.

Fig.~\ref{fig:adaptivevsnon} shows that the adaptive
method results in a value of~$N_{\exp}$ that is approximately
constant in $a$.  In comparison, the number of exponentials used in the constant step size case diverges as~$a$ approaches
zero.  This divergence occurs because~$\lim_{a\rightarrow 0} \max_{t\in[0,2]}\Upsilonfcn_{4}(t)=\infty$, whereas the adaptive method
yields a nearly constant value of~$N_{\exp}$, because
$\int_{0}^{2}\Upsilonfcn_4(t)\mathrm{d}t$ only weakly depends on~$a$.
This shows that it can be advantageous to choose the step size adaptively if $H(u)$ varies substantially over the interval of simulation.

The Hamiltonian that we use in our second example is
\begin{equation}
H(t)=t^5\sin(1/t)\exp(-t)\openone.
\end{equation}
The second-order Lie-Trotter-Suzuki formula should not be used as an approximation to~$U(\dt,0)$~\cite{WBHS10},
because the third derivative of~$H(t)$ diverges near $t=0$.  However, $U_2$ can be used to approximate
the time-evolution on any closed interval that excludes $t=0$.  A natural way to handle this is to divide the
approximation into two subintervals, and use different values of~$k$ to approximate the evolution within each subinterval.  The evolution in the first subinterval, $u\in[0,t']$, is approximated using~$U_1$, whereas the
evolution in the remaining subinterval is approximated using~$U_2$.  We reduce the number of
exponentials that are used in our simulation schemes for any fixed value of~$t'$ by applying our adaptive step size algorithm in each subinterval.
We then vary $t'$ using a gradient search method to change the size of each subinterval to further reduce the number of exponentials that are used in the simulation.

We show in Fig.~\ref{fig:2layer} that choosing~$k$ adaptively can lead to reductions in the number
of exponentials that are needed to approximate the time-evolution operator. Therefore, when simulating the evolution generated by  Hamiltonians with
singularities, it may be more efficient to use lower-order formul\ae~near the singularity, and higher-order
formul\ae~further away from it.  This method may have applications in situations where high-order integrators
fail to provide the scaling expected for singular differential equations, such as those that occur in Coulomb problems~\cite{Chi06}.

\section{Simulating One-Sparse Hamiltonians}\label{sec:oracles}
In Sections~\ref{sec:timedep} and~\ref{sec:adaptive} we specified the cost of simulating the evolution as a function of $\Costvar$, which is the number of oracle calls that are needed to simulate a one-sparse Hamiltonian.  In this section, we provide an upper bound for $\Costvar$ and discuss how this cost relates to those obtained using other oracle definitions.  We then use this bound for $\Costvar$ in concert with our results from Sec.~\ref{sec:adaptive} to prove Theorem~\ref{thm:adaptiveResult} and Corollary~\ref{cor:linscale}.

\begin{lemma}
\label{lem:oraclem}
Let $\{H_\alphavar:\alphavar=1,\ldots,M\}$, where~$H_\alphavar:\mathbb{R}\mapsto\mathbb{C}^{2^n\times 2^n}$,  be a set of time-dependent Hermitian operators that is $2k$-smooth on~$[\initt,\initt+\dt]$, and let $\{H_{\alphavar,j}\}$ represent the set of one-sparse Hamiltonians that
result from applying the BACS decomposition algorithm to each~$\{H_\alphavar\}$.
The query complexity, $\Costvar$, of simulating $\exp\left\{-iH_{\alphavar,j}(\tau)\delta t\right\}$ for any $j$ and~$[\tau,\tau+\delta t]\subseteq [\initt,\initt+\dt]$ using the oracles
$\fynoarg$ and~$\BBnoarg$ and~$\numQubitsInH$
  bits of precision to represent the matrix elements, is bounded above by
\begin{equation}
\label{eq:cbound}
	\Costvar \le 4n(z_n+2)+3\numQubitsInH,
\end{equation}
where~$z_n$ is the number of times the mapping $z\mapsto \lceil2\log_2(z) \rceil$ must be iterated, starting at $n$, before reaching a value that is at most $6$.
\end{lemma}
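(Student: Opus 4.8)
The plan is to assemble the circuit for $\exp\{-iH_{\alphavar,j}(\tau)\delta t\}$ out of two subroutines, mirroring the explicit circuit of Fig.~\ref{fig:simcircuit}. The first is a \emph{partner-finding} routine: acting on a register holding a row label $\ket{x}$ together with clean ancillas, it coherently writes the column label $\nu(x)$ of the unique nonzero element of the one-sparse $H_{\alphavar,j}$ in row $x$, along with an orientation bit recording whether $x$ is the smaller or the larger member of the pair $\{x,\nu(x)\}$. The second is a \emph{block-exponential} routine: conditioned on the orientation bit, and using $\nu(x)$ together with a controlled-swap to bring the two relevant amplitudes alongside each other, it queries $\BBnoarg$ for the polar-form encoding $[H_\alphavar(\tau)]_{x,\nu(x)}=\rho e^{i\phi}$ and applies the corresponding $2\times 2$ unitary on the subspace spanned by $\ket{x}$ and $\ket{\nu(x)}$ (diagonal blocks, where $\nu(x)=x$, being a degenerate $1\times1$ special case). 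Since the ancillas produced by the first routine must be uncomputed afterwards, the exponential is realized as $V^{\dagger}\,W\,V$ with $V$ the partner-finding routine and $W$ the block-exponential routine, so the query count is $2\,(\text{cost of }V)+(\text{cost of }W)$.

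For the block-exponential routine $W$, I would write the relevant $2\times2$ block, up to the magnitude bound $\upp$, as the off-diagonal matrix with entry $\rho e^{i\phi}$, and note that its exponential factors as a conjugation, by a diagonal phase built from $\phi$, of a real rotation through an angle proportional to $\upp\rho\delta t$. Because $\BBnoarg$ yields one bit per query, and the controlled single-qubit rotations associated with distinct bits of $\rho$ (and likewise of $\phi$) mutually commute, $W$ can be built bit by bit: $\numQubitsInH$ queries to assemble the magnitude rotation, and $\numQubitsInH$ queries each for the phase and for its inverse, giving $3\numQubitsInH$ queries in all, with no dependence on $n$. This is precisely the $3\numQubitsInH$ term in Eq.~\eqref{eq:cbound}. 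The smoothness hypothesis enters here only to guarantee that the matrix elements are bounded, so that $\upp$ exists and the encoding of Eq.~\eqref{eq:encode} is legitimate.

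For the partner-finding routine $V$, I would instantiate the BACS decomposition~\cite{BACS07} — which, as recalled above, identifies the column index of $H_{\alphavar,j}$ in a given row using $O(\log^*n)$ queries to the oracle for $H_\alphavar$ — with our single-bit oracle $\fynoarg$ and run it reversibly. Given $\ket{x}$, one reads the bits of the relevant column labels, assigns each incident edge an initial colour from the pair of local indices of that edge at its two endpoints, and then iterates the deterministic Linial/Cole--Vishkin colour-reduction round, which replaces a colour lying in $\{1,\dots,z\}$ by the pair (index of a distinguishing bit, value of that bit) and so contracts the palette, in the worst case, via $z\mapsto\lceil 2\log_2 z\rceil$. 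Beginning from $n$-bit labels, by the very definition of $z_n$ this reaches a palette of size at most $6$ after $z_n$ rounds, and a constant number — two — of further bounded-palette cleanup rounds then complete the $6d^2$-edge-colouring that determines which column belongs to piece $j$; thus $z_n+2$ rounds suffice. Each round reads only $O(n)$ bits of row/column labels but must, for reversibility, compute and then erase the round's intermediate colour data; together with the final reversal of $V$ in the conjugation $V^{\dagger}WV$, these overheads are collected into the constant, yielding $4n(z_n+2)$ queries. Adding the $3\numQubitsInH$ from $W$ gives $\Costvar\le 4n(z_n+2)+3\numQubitsInH$.

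The main obstacle will be the bookkeeping inside $V$: making the BACS colouring run coherently and garbage-free from an oracle that returns a single bit per call, confirming that the palette genuinely contracts no slower than $z\mapsto\lceil 2\log_2 z\rceil$ all the way down to $\le 6$ in exactly $z_n$ rounds (and that the two constant-size cleanup rounds do not need more than $4n$ queries each), and checking that padding of rows with fewer than $d$ nonzero entries and the treatment of diagonal blocks leave the count unchanged. Pinning down the multiplicative constant — getting $4n$ per round rather than a larger multiple — requires being explicit about which ancilla registers can be uncomputed for free by Bennett-style reversal and which force a re-query of $\fynoarg$; this is the one place where the argument is genuinely delicate rather than routine.
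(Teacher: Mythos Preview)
Your proposal is correct and follows essentially the same approach as the paper: the $V^{\dagger}WV$ structure (BACS partner-finding, then block rotations via the polar encoding, then uncompute), with $W$ costing $3\numQubitsInH$ queries and the two applications of the BACS routine costing $4n(z_n+2)$. The paper's own argument is shorter only because it treats the $2(z_n+2)$ multi-bit-oracle query count as a black box imported from~\cite{BACS07}, whereas you sketch the Cole--Vishkin colour-reduction explicitly; your accounting of the constant $4$ (internal reversibility plus the outer $V^{\dagger}$) is slightly reorganised relative to the paper's ($2$ from BACS, $\times n$ for single-bit output, $\times 2$ for the inversion), but lands in the same place. One small point to tighten: in the paper $\numQubitsInH$ is split evenly between modulus and phase, so each of the three data-dependent rotations uses $\numQubitsInH/2$ bits queried twice (compute then erase) rather than $\numQubitsInH$ distinct bits --- your total of $3\numQubitsInH$ is right, but the per-rotation breakdown should reflect this.
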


To prove this Lemma, we take advantage of the polar decomposition of the matrix elements of the Hamiltonians.
This enables a remarkably efficient simulation of the evolution under one-sparse operator exponentials using a sequence of rotations that
are controlled by only one qubit at a time.
In particular, the result is as in the following Lemma.

\begin{lemma}\label{lem:onequbit}
Given that the assumptions of Lemma~\ref{lem:oraclem} are met, and if $\numQubitsInH$ bits of precision are used to
represent each of the matrix elements of~$H_j$, then~$\exp(-iH_{\alphavar,j}(\tau)\delta t)$ can be simulated using
one qubit to store these matrix elements and~$3\numQubitsInH$ queries to the oracle~$\BBnoarg$.
\end{lemma}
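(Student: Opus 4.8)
The plan is to exhibit an explicit circuit that simulates $\exp(-iH_{\alphavar,j}(\tau)\delta t)$ for a one-sparse Hamiltonian, using the polar encoding of the matrix elements. First I would recall the standard structure of one-sparse simulation: a one-sparse Hermitian matrix $H_{\alphavar,j}$ either is diagonal, or decomposes into a direct sum of $1\times 1$ blocks and $2\times 2$ blocks of the form $\begin{pmatrix}0 & h\\ h^* & 0\end{pmatrix}$ (up to possible diagonal entries paired with the off-diagonal ones; for the BACS one-sparse pieces the relevant blocks are purely off-diagonal). Using $\fynoarg$ one can, given the row index $x$, compute the unique column index $y$ connected to it, store $\ket{y}$ in an ancilla register, and thereby bring the state into a basis in which $H_{\alphavar,j}$ acts block-diagonally on the pair $\{\ket{x},\ket{y}\}$; this is the usual ``compute the partner index'' step and costs queries only to $\fynoarg$, not $\BBnoarg$, so it does not enter the $3\numBitsInH$ count. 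Within each $2\times 2$ block, $\exp\bigl(-i\delta t\begin{pmatrix}0 & h\\ h^* & 0\end{pmatrix}\bigr)$ with $h=\rho e^{i\phi}$ is a rotation by angle $\rho\,\delta t$ about an axis in the $xy$-plane determined by $\phi$, i.e.\ it equals $R_z(\phi)\,R_x(2\rho\,\delta t)\,R_z(-\phi)$ up to a convention, acting on the effective qubit spanned by $\ket{x},\ket{y}$.

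The key point, and the reason only one qubit of storage is needed, is the encoding~\eqref{eq:encode}: $\rho$ is written in binary as $\upp\sum_{\ell=1}^{\numBitsInH}\rho_\ell 2^{-\ell}$, so the rotation $R_x(2\rho\,\delta t)$ factorizes as $\prod_{\ell=1}^{\numBitsInH} R_x\bigl(2\upp\,\delta t\,2^{-\ell}\bigr)^{\rho_\ell}$, a product of $\numBitsInH$ rotations each controlled by a single bit $\rho_\ell$. Likewise the phase $\phi$ is provided bit by bit and the conjugating $R_z(\pm\phi)$ factorizes into at most $\numBitsInH$ single-bit-controlled $R_z$ rotations on each side. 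So the procedure is: for each bit index $\ell$, (i) query $\BBnoarg$ to load bit $\ell$ of $\phi$ into a single ancilla qubit, apply the controlled $R_z$ increment, and uncompute the qubit with a second query; then after the $R_x$ part similarly for the other $R_z(-\phi)$; and (ii) for each $\ell$, query $\BBnoarg$ to load bit $\ell$ of $\rho$, apply the controlled $R_x\bigl(2\upp\,\delta t\,2^{-\ell}\bigr)$, and uncompute with a second query. Each of the $\numBitsInH$ bits is thus touched at most three times by a $\BBnoarg$ query (once for $\phi$ on each side plus once for $\rho$, or organized so that each load/unload pair plus the $\rho$ load totals $3\numBitsInH$ as claimed; the factorization of both $R_z$ conjugations and the $R_x$ rotation into single-bit-controlled pieces is what makes the one-qubit workspace suffice), giving the $3\numBitsInH$ bound, and only a single qubit is ever needed to hold an oracle-output bit since each is used and immediately returned before the next is fetched.

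The main obstacle I anticipate is bookkeeping the query count honestly: a naive scheme loads each $\phi$ bit twice (once per conjugating $R_z$) plus each $\rho$ bit once plus the corresponding uncomputations, which could come out larger than $3\numBitsInH$ unless one is careful to reuse loaded bits or to note that uncomputation of an oracle-written qubit by re-querying $\BBnoarg$ is one query (since $\BBnoarg$ is its own inverse on that register). I would resolve this by loading all needed information for a given bit index once, performing every rotation that depends on it, and then uncomputing once — and by observing that the two $R_z$ conjugations can share the same loaded $\phi$ bits, so the accounting is: $\numBitsInH$ queries to load the $\phi$ bits, $\numBitsInH$ to load the $\rho$ bits, and $\numBitsInH$ to uncompute, totaling $3\numBitsInH$. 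A secondary subtlety is confirming that the single-qubit-controlled $R_x$ increments genuinely compose to $R_x(2\rho\,\delta t)$ regardless of the order of the controls, which holds because all the $R_x(\theta_\ell)$ commute; and that the diagonal/$1\times1$-block cases are trivial (they contribute nothing, or a single-bit-controlled phase). The details of the $2\times 2$ block decomposition of a one-sparse Hermitian matrix are standard and I would cite the BACS construction~\cite{BACS07} for them.
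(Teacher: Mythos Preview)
Your overall strategy is the same as the paper's: reduce the one-sparse exponential to a two-level rotation of the form $R_z(\pm\phi)\,R_{x/y}(2\rho\,\delta t)\,R_z(\mp\phi)$ and implement each rotation bit by bit as a product of single-bit-controlled rotations, loading and erasing each oracle bit with one ancilla qubit. Where your argument breaks down is the query accounting.

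The claim that ``the two $R_z$ conjugations can share the same loaded $\phi$ bits'' is incompatible with the one-qubit storage constraint. The two $R_z(\pm\phi)$ factors sit on opposite sides of the non-commuting middle rotation $R_x(2\rho\,\delta t)$, so you cannot perform both $R_z$ factors while a given $\phi$-bit is loaded and then do the $R_x$ afterwards; nor can you keep any $\phi$-bit resident while you cycle through the $\rho$-bits for the middle rotation, since only one ancilla qubit is available. Consequently each $\phi$-bit must be loaded and erased once for the left $R_z$ and again for the right $R_z$. With your bit allocation ($\numQubitsInH$ bits for $\rho$ and another $\numQubitsInH$ for $\phi$) the honest count is $2\numQubitsInH$ queries per data-dependent rotation, for a total of $6\numQubitsInH$, not $3\numQubitsInH$. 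Your proposed ``$\numQubitsInH$ to uncompute'' line also undercounts: every loaded bit must be individually erased, so the uncompute cost equals the load cost.

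The fix the paper uses, and which you are missing, is that the $\numQubitsInH$ bits of precision are \emph{split}: $\numQubitsInH/2$ bits encode the modulus $\rho$ and $\numQubitsInH/2$ encode the phase $\phi$. There are then three data-dependent rotations (two $R_z(\pm\phi)$ and one $R_y(2\rho\,\delta t)$), each controlled by $\numQubitsInH/2$ bits; loading and erasing each bit costs two queries, so the total is $3\cdot(\numQubitsInH/2)\cdot 2 = 3\numQubitsInH$. A secondary point you gloss over is the diagonal ($1\times 1$) block case: it is not quite ``trivial'', since it must be handled coherently with the $2\times 2$ case. The paper rewrites the diagonal phase as $R_z(-\phi)R_x(-\pi/2)R_y(2\rho\,\delta t)R_x(\pi/2)R_z(\phi)$ so that both cases share the same three data-dependent rotations and differ only by fixed $\pi/2$ rotations controlled on a ``dimension'' flag; this is what keeps the query count at $3\numQubitsInH$ uniformly.
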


\begin{proofof}{Lemma~\ref{lem:onequbit}}
The decomposition scheme of BACS \cite{BACS07} takes a given row number $x$, and determines if there will be a matrix element in row $x$ assigned to $H_{\alphavar,j}$.
If there is, then the boolean function $\xi(x)$ is set to 1; otherwise it is zero.
If there is a matrix element, then the column number is determined, and $M_x$ and $m_x$ are set equal to the row and column numbers, such that $M_x\ge m_x$.
The decomposition method also generates a three-bit string $\nu(x)$, which we will not otherwise use in the simulation scheme.
The BACS decomposition method will therefore transform the initial state $\ket{\psi}$ according to
\begin{equation}
\ket{\psi}=\sum_x a_x\ket{x,0^{\otimes 2n+4}}\mapsto\sum_x a_x\ket{x,m_x,M_x,\nu(x),\xi(x)}.
\end{equation}

Then given this transformed state, our next goal is to perform a mapping between the subspace $\text{span}\{\ket{m_x},\ket{M_x}\}$ and an ancilla qubit space.  The purpose of this is to map this two-dimensional subspace onto one that can be evolved
using single-qubit operations.  This transformation is
\begin{equation}
\sum_{x}a_x \ket{x,m_x,M_x,\nu(m_x),\xi(m_x),0,0}\mapsto\sum_{x}a_x\ket{0^{n}}\otimes\left\{\begin{array}{cl}x=M_x=m_x,&\ket{m_x\oplus M_x,M_x,\nu(m_x),\xi(x),1,1}\\
x=M_x\neq m_x,&\ket{m_x\oplus M_x,M_x,\nu(m_x),\xi(x),1,0}\\
x= m_x\neq M_x,&\ket{m_x\oplus M_x,M_x,\nu(m_x),\xi(x),0,0}
\end{array}\right..\label{eq:subspacemap}
\end{equation}
The second last qubit is the one that encodes the two-dimensional subspace; it is $\ket{1}$ if $x=M_x$, and $\ket{0}$ if $x=m_x$.
The last qubit is used to indicate if $x$ is a member of a one-dimensional subspace.

\begin{figure}[t!]
\centering
\includegraphics[width=\columnwidth]{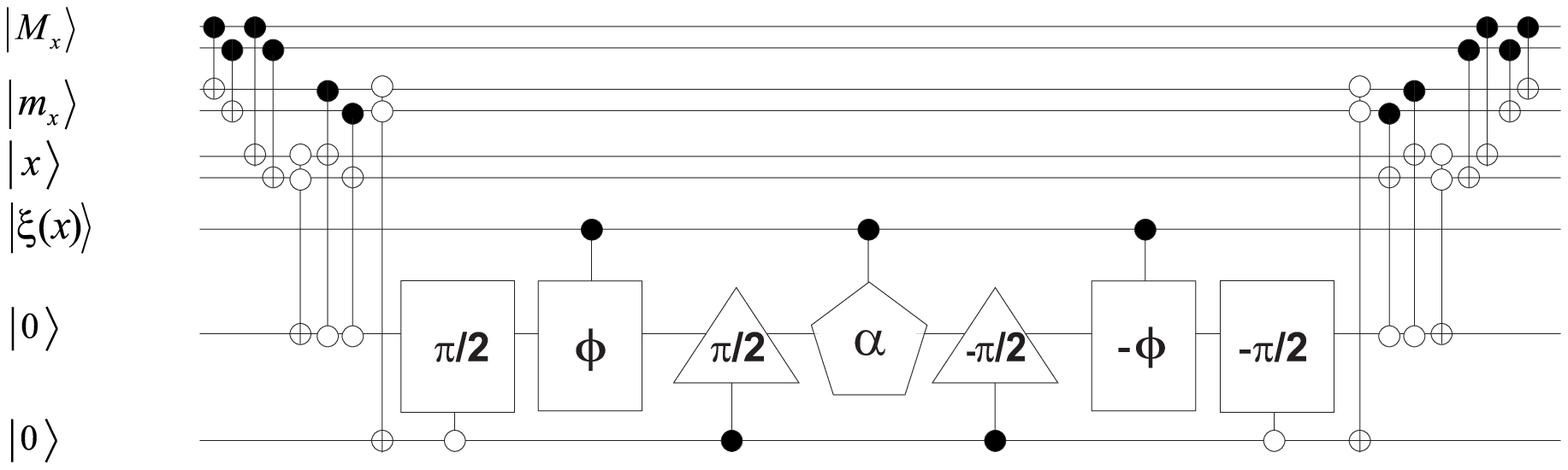}
\caption{\label{fig:simcircuit}This circuit simulates $\exp(-i\sparseham_{\alphavar,j}(t_p)\delta t_p)$ for the one-sparse Hamiltonian
$\sparseham_{\alphavar,j}$, given an input state
of the form of the LHS of~\eqref{eq:subspacemap}.  Here the variable $\phi=\text{Arg}\big([\sparseham_{\alphavar,j}(t_p)]_{m_x,M_x}\big)$ and~$\alpha=2|[\sparseham_{\alphavar,j}(t_p)]_{m_x,M_x}|\dt_p$.  Here we also use rectangles to represent $R_z$ rotations by a fixed angle,
triangles represent $R_x$ rotations and the pentagon represents a $R_y$ rotation.  These rotations can be enacted by querying the oracle
$\BBnoarg$ and performing controlled rotations on the output.}
\end{figure}
Given a state in the form of the RHS of~\eqref{eq:subspacemap}, we can simulate the evolution of~$\ket{x}$ by evolving the second last ancilla qubit,
whose state is logically equivalent to~$\ket{x}$.  To do so, we must know whether $x$ is in a one- or two-dimensional irreducible
subspace.
Specifically, the evolution operator takes one of two possible forms on the subspace $\text{span}(\ket{m_x},\ket{M_x})$. It performs the transformation
\begin{equation}\label{eq:1sparseonesubspace}
a_{M_x}\mapsto a_{M_x}\exp(-i[\sparseham_{\alphavar}(t_p)]_{m_xM_x}\dt)
\end{equation}
or
\begin{equation}\label{eq:1sparsetwosubspaces}
\left[\begin{array}{c}a_{m_x}\\a_{M_x}\end{array}\right]\mapsto \exp\left(-i\left[\begin{array}{cc} 0&[\sparseham_{\alphavar}(t_p)]_{m_xM_x}\\\big([\sparseham_{\alphavar}(t_p)]_{m_xM_x}\big)^{*}&0\end{array} \right]\dt\right)\left[\begin{array}{c}a_{m_x}\\a_{M_x}\end{array}\right],
\end{equation}
if the subspace is one- or two-dimensional, respectively.

We can write the two-dimensional rotation as a sequence of Pauli-$z$ and -$y$ rotations using standard decomposition techniques~\cite{NC00}.
Given $H_{m_x,M_x}=\rho\exp(i\phi)$, the resulting decomposition of the two-dimensional transformation is
\begin{equation}\label{eq:2d}
\exp\left(-i\left[\begin{array}{cc} 0&\big[\sparseham_{\alphavar}(t_p)\big]_{m_xM_x}\\\big[\sparseham_\alphavar(t_p)\big]_{m_xM_x}^{*}&0\end{array} \right]\dt_p\right)=R_z(-\pi/2)R_z(-\phi)R_y(2\rho\Delta t_p)R_z(\phi)R_z(\pi/2).
\end{equation}
The sequence of exponentials in \eqref{eq:2d} can be implemented using a sequence of Pauli rotations that are controlled by
the qubits that encode the matrix element $H_{m_x,M_x}$.

The one-dimensional subspace is evolved according to
\begin{equation}
\ket{x}\rightarrow \exp\biggr({-i\big[\sparseham_{\alphavar}(t_p)\big]_{xx}\dt_p}\biggr)\ket{x}.
\end{equation}
Because we have encoded $\ket{x}$ in the one-dimensional case to be $\ket{1}$ in \eqref{eq:subspacemap},
the one-dimensional transformation can be expressed as
$R_z\big(-2[\sparseham_{\alphavar}(t_p)]_{M_xM_x}\dt_p\big)$.
This rotation can also be written as,
\begin{equation}
R_z\big(-2[\sparseham_{\alphavar}(t_p)]_{M_xM_x}\dt_p\big)=R_z(-\phi)R_x(-\pi/2)R_y(2\rho\Delta t_p)R_x(\pi/2)R_z(\phi)\label{eq:1d},
\end{equation}
which allows us to write the one-dimensional rotation in a form that is similar to the two-dimensional rotation.

We present a circuit in Fig.\ \ref{fig:simcircuit} that enacts both the transformation in \eqref{eq:subspacemap} and also \eqref{eq:1d} and \eqref{eq:2d} coherently on each subspace.
We combine the rotations for both the one- and two-dimensional cases together in a single sequence of rotations, by making the $\pi/2$ rotations controlled by the last qubit.
In addition, we allow the rotations to be controlled by $\xi(x)$, so no rotations are performed if no matrix element has been assigned to row $x$ of $H_{\alphavar,j}$.

The rotations $R_y(\alpha)$, where~$\alpha=2\rho\dt_p$, and~$R_z(\phi)$ in Fig.\ \ref{fig:simcircuit} can each be implemented by calling the oracle
$\BBnoarg$ $\numQubitsInH/2$ times, provided $\numQubitsInH$ is even, and equal numbers of bits are used to encode the modulus and phase of the matrix element.  Since there are three rotations of this form, $3\numQubitsInH/2$ oracle
calls can be used to enact them.  However, in order to re-use the ancilla bits that record these values,
we need to call the oracle another $3\numQubitsInH/2$ times.  Therefore, the total number of calls made to~$\BBnoarg$
is bounded above by $3\numQubitsInH$.
\end{proofof}

Note that, in Fig.\ \ref{fig:simcircuit} we can access each qubit of the oracles independently, without needing to store the other qubits.
This is because, for each controlled operation (such as $R_z(\phi)$), we can call the oracle for one qubit of precision, perform the rotation for that qubit, then call the oracle again to erase the value, before calling the oracle for the next qubit.
Now that we have proven this Lemma, the proof of Lemma \ref{lem:oraclem} is simple.

\begin{proofof}{Lemma~\ref{lem:oraclem}}
The one-sparse matrix exponentials may be performed by using the BACS decomposition technique, then performing the rotations as described in the proof of Lemma~\ref{lem:onequbit}, then inverting the BACS decomposition technique to restore the ancilla qubits to their original states.
The BACS decomposition technique uses $2(z_n+2)$ queries to their oracle to identify the irreducible subspace that a basis state is in, and store this information in a qubit string~\cite{BACS07}.
Using an oracle that only provides one qubit at a time, the number of oracle calls is multiplied by a factor of $n$.
Another factor of 2 is obtained because the BACS decomposition technique is inverted, yielding a total number of queries of $4n(z_n+2)$.
Using Lemma~\ref{lem:onequbit}, the rotations can be performed using~$3\numQubitsInH$ calls to~$\BBnoarg$, so the total number of oracle calls needed is bounded as in Eq.\ \eqref{eq:cbound}.
\end{proofof}

Using Lemma \ref{lem:oraclem}, it is now straightforward to prove Theorem~\ref{thm:adaptiveResult}, which is our main result in the paper.
\begin{proofof}{Theorem~\ref{thm:adaptiveResult}}
The proof follows directly by substituting the result of Lemma~\ref{lem:oraclem} into those of Theorem~\ref{thm:adaptiveResult2}, while noting that the second and third term in~\eqref{eq:Noracleresult2} are asymptotically subdominant.
\end{proofof}

As discussed in Ref.\ \cite{WBHS10} for the case of constant time steps, if the Hamiltonian is sufficiently smooth then we can choose $k$ to increase with $\dt$, so that the complexity scales close to linearly in $\Lambdavar\dt$.
Here we obtain a similar result for the case where the time steps are chosen adaptively.
Theorem~\ref{thm:adaptiveResult} provides a guide to choose an optimal value of $k$, which then enables us to prove Corollary~\ref{cor:linscale}.

\begin{proofof}{Corollary~\ref{cor:linscale}}
As~$\{H_\alphavar\}$ is $\Lambda$-$\infty$-smooth, we can choose~$k$ to be any positive integer: in particular we choose~$k=\kz$ where
\begin{equation}
\label{eq:pickkpseudolinear}
	\kz = \left\lceil \sqrt{\frac{1}{2} \log_{25/3}\left(\frac{d^2\bar\interval(\initt+\dt,\initt)\dt}{\epsilon}\right)}~\right\rceil.
\end{equation}
Equation~\eqref{eq:pickkpseudolinear} implies that~$\kz \in \left({d^2\bar\interval(\initt+\dt,\initt)\dt}/{\epsilon}\right)^{o(1)}$ and
$(25/3)^{\kz}\in{(d^2\bar\interval(\initt+\dt,\initt)\dt/\epsilon)^{o(1)}}$
because the square-root of a logarithm grows slower than a logarithm.
We also have
\begin{equation}
	 \left({d^2\bar\interval(\initt+\dt,\initt)\dt}/{\epsilon}\right)^{1/2\kz} \in {(d^2\bar\interval(\initt+\dt,\initt)\dt/\epsilon)^{o(1)}}\label{eq:sublinear},
\end{equation}
because~$\kz$ increases with~$\bar\interval(\initt+\dt,\initt)\dt/\epsilon$.
We then obtain the scaling given in Eq.~\eqref{eq:Noraclescale2} by substituting these expressions into~\eqref{eq:Noracleresult2} and using Lemma \ref{lem:oraclem}.
\end{proofof}

For Corollary~\ref{cor:linscale} it was required that there exists $\scalconst>0$ such that~$\Upsilonfcn'(t)\le \scalconst^2\Upsilonfcn^2(t)$, because this was part of the conditions of Theorem \ref{thm:adaptiveResult}. However, it can be expected that this requirement is automatically satisfied when $\{H_\alphavar\}$ is $\Upsilon$-$\infty$-pointwise-smooth, provided $\Upsilon$ is chosen appropriately (see Appendix~\ref{appendix:derivbd}).

Before concluding, we should also estimate the space-complexity of our algorithm.
It follows from an analysis of the BACS decomposition algorithm, that the number of qubits needed for our simulation is $O(n(\log^*n)^2)$. Unlike the BACS simulation algorithm~\cite{BACS07}, this number does not depend on $\|H\|$ and $\epsilon$.

\section{Conclusions}
We introduce in this paper a pair of quantum algorithms that can be used to simulate time-dependent Hamiltonian evolution on quantum computers, using constant-size or adaptively chosen time steps.
The adaptive time step method can provide superior performance, but requires that upper bounds on the norm of the Hamiltonian and its derivatives are known throughout the simulation.  In both cases, these simulation algorithms can be performed with similar query complexity to the BACS simulation algorithm~\cite{BACS07}
if $H(t)$ is a sum of sufficiently smooth terms.

We also show how to resolve pathological examples, such as our earlier example~\cite{WBHS10} wherein the higher-order derivatives of~$H(t)$ diverge at one point, although this method cannot be used to attain near-linear scaling unless $H(t)$ is a sum of terms that are piecewise sufficiently smooth.
  Furthermore, we have shown that the number of operations used in a simulation of time-dependent
Hamiltonian evolution can be reduced by using lower-order
Lie-Trotter-Suzuki formul\ae~to approximate time-evolution near singularities in the Hamiltonian, and higher-order
formul\ae~farther away from the singularities.  This approach may also be useful in approximating
the solutions to singular differential equations on classical computers.

{It may be difficult to compute some of these quantities, such as~$\Lambdavar$, for some Hamiltonians.  In such circumstances our simulation schemes can
still be used, but the output state of the simulation may not be correct within an error tolerance of~$\epsilon$.  We recommend that heuristic testing be used to estimate whether the
error is within this tolerance in such circumstances.  Such a method could involve performing the swap test~\cite{BCWW01} between the output states of two separate
simulations that employ distinct values of the uncertain parameter.
However values such as~$n$ and
the upper bound for $\|\sparseham_\alphavar\|$ that the oracle uses to encode the matrix elements must be known
in order to perform the simulation.}

We quantify the computational complexity by the number of calls to the oracles~$\fynoarg$, $\BBnoarg$, and~$\{T_\alphavar\}$. These oracles would not typically be fundamental operations, but would be quantum subroutines consisting of sequences of fundamental quantum operations.  The oracles can be
implemented \emph{efficiently} by the quantum computer if each~$\sparseham_{\alphavar}(t)$ is row-computable at every time during the simulation, and there are efficient quantum circuits for the basis transformations $T_\alphavar$.

 This work leaves open several interesting avenues for investigation.  One such avenue is to address the question of whether
 or not strictly linear-time quantum simulation algorithms are possible if the Hamiltonian is time-dependent and sufficiently smooth.
 In addition, it would  be interesting to determine whether or not it is possible to devise an algorithm for which the query complexity scales
 poly-logarithmically with the reciprocal of the error tolerance, rather than sub-polynomially as our algorithm does.

\appendix
\section{Proof of Lemma~\ref{lem:adError}}\label{sec:round-off}
In this section we present a proof of Lemma~\ref{lem:adError}.
It provides values for both $\numQubitsInH$ and~$\numBitsInT$ that ensure that the discretization error is bounded above by $\epsilon/2$.

\begin{proofof}{Lemma~\ref{lem:adError}}
Here we define $\sigma:=\dt/2^{\numBitsInT}$, and define $\tilde H(t)$ to be an approximation to
the Hamiltonian $H(t)$ that uses $\numQubitsInH$ bit
approximations to the matrix elements.  We use $\tilde{H}_{\mu,j}(t)$ to represent
 an approximation to the Hamiltonian $H_{\mu,j}$ that is formed by taking $\numQubitsInH$ bit
approximations to each matrix element of the one-sparse $H_{\mu,j}$.  Similarly, we define $\tilde{\tau}$ to be the closest mesh point to
the time $\tau$ that is still inside the interval~$\mathcal{I}$. In most cases, the nearest mesh point will be at most a distance of~$\sigma/2$ away
from $\tau$, but there are cases where the distance can be up to~$\sigma$.  This occurs when $\tau$ is near the boundary
of one of the subintervals of~$\mathcal I$.  Because the subintervals can have length no shorter than $\sigma$, there will always be a mesh point within
the subinterval, and the distance will not be greater than $\sigma$.  This implies that~$|\tilde \tau -\tau|< \sigma$.
Then, using this notation, our goal in this proof is to show that values of~$\numBitsInT$ and~$\numQubitsInH$ satisfying the inequalities in~\eqref{eq:n'eq0}
guarantee
\begin{equation}
\left\|\prod_{p=1}^{N_{\exp}}T_{\mu_p}^{\dagger}\exp[{-iH_{\mu_p,j_p}(\tau_p)\dt_p}]T_{\mu_p}- \prod_{p=1}^{N_{\exp}}T_{\mu_p}^{\dagger}\exp[{-i\tilde H_{\mu_p,j_p}(\tilde{\tau}_p)\dt_p}]T_{\mu_p}\right\|\le\epsilon/2,\label{eq:round-offgoal}
\end{equation}
where~$H_{\mu_p,j_p}$ denotes an element from
a particular sequence of one-sparse Hamiltonians.

Using Eq.~\eqref{eq:mike} we find that, for Hermitian operators $A$ and~$B$,
\begin{align}
\left\| e^{-iA} - e^{-iB}\right\| &= \lim_{n \to \infty} \left\| \exp(-iA/n)^n -\exp(-iB/n)^n\right\| \nn
&\le \lim_{n \to \infty} n \left\| \exp(-iA/n) -\exp(-iB/n)\right\| \nn
&= \lim_{n \to \infty} [\|A-B\| +O(1/n)] = \|A-B\|.
\end{align}
Using this result with the Hermitian operators $H_{\mu_p,j_p}(\tau_p)$ and~$\tilde H_{\mu_p,j_p}(\tilde{\tau}_p)$ gives
\begin{equation}
\|\exp[{-iH_{\mu_p,j_p}(\tau_p)\dt_p}]-\exp[{-i\tilde H_{\mu_p,j_p}(\tilde\tau_p)\dt_p}]\|\leq\|H_{\mu_p,j_p}(\tau_p)-\tilde H_{\mu_p,j_p}(\tilde{\tau}_p)\|\dt_p.\label{eq:errbd1exp}
\end{equation}
Then we obtain, using Eq.~\eqref{eq:mike},
\begin{align}
& \left\|\prod_{p=1}^{N_{\exp}}T_{\mu_p}^{\dagger}\exp[{-iH_{\mu_p,j_p}(\tau_p)\dt_p}]T_{\mu_p}
- \prod_{p=1}^{N_{\exp}}T_{\mu_p}^{\dagger}\exp[{-i\tilde H_{\mu_p,j_p}(\tilde\tau_p)\dt_p}]T_{\mu_p}\right\| \nn
&\leq \sum_{p=1}^{N_{\exp}}\|H_{\mu_p,j_p}(\tau_p)- \tilde H_{\mu_p,j_p}(\tilde\tau_p)\| |\dt_p|
 \le \max_{p} \left\{\|H_{\mu_p,j_p}(\tau_p)- \tilde H_{\mu_p,j_p}(\tilde\tau_p)\|\right\}\sum_{p=1}^{N_{\exp}}|\dt_p|.\label{eq:allbnd}
\end{align}

Our next step is to bound the sum $\sum_{p=1}^{N_{\exp}}|\dt_p|$.  To do so we note that, for the simulation, the time is broken up into~$r$ short intervals, and on each of these the Lie-Trotter-Suzuki formula $U_k$ is used. We denote the $r$ intervals by $\mathcal{I}_1,\mathcal{I}_2,\ldots,\mathcal{I}_r$, and the durations of these intervals by $T_1,T_2,\ldots,T_r$. Using Eq.~(A.3) of Ref.\ \onlinecite{WBHS10}, if the \Th{p} exponential is part of the \Th{q} interval, then the duration of that exponential, $\dt_p$, is at most
\begin{equation}
|\dt_p|\le (2k/3^k)T_q.\label{eq:dtformula}
\end{equation}
The Lie-Trotter-Suzuki formula $U_k$ is composed of~$12Md^2 5^{k-1}$ exponentials, each with a duration that is bounded above by~\eqref{eq:dtformula}.
In addition, $\sum_{q=1}^r T_q\le\dt$, so the total duration of the exponentials used to simulate the evolution in the interval~$\mathcal{I}_q$ is at most
\begin{equation}
\label{eq:sumtbnd}
\sum_{p:\tau_p\in \mathcal{I}_q}|\dt_p|\le 8kMd^2(5/3)^{k-1}T_q
\le 8kMd^2(5/3)^{k-1}\Delta t.
\end{equation}
Lemma~\ref{lem:adError} can be used generally in this work, because this relation does not require that the $r$ intervals have the same duration, or that
$\mathcal{I}$ is a continuous time interval.
The only requirement we have used is that the Lie-Trotter-Suzuki integrator $U_k$ has been used.
This is to obtain the relation \eqref{eq:dtformula} and the number of exponentials in the integrator.

Next, using the triangle inequality, we have
\begin{equation}
\label{eq:htri}
	\left\|H_{\mu_p,j_p}(\tau_p)- \tilde H_{\mu_p,j_p}(\tilde\tau_p)\right\|
		\le \left\|H_{\mu_p,j_p}(\tau_p)- H_{\mu_p,j_p}(\tilde{\tau}_p)\right\|
			+\left\|H_{\mu_p,j_p}(\tilde\tau_p)- \tilde H_{\mu_p,j_p}(\tilde\tau_p)\right\|.
\end{equation}
Using Taylor's theorem, we find that an upper bound for the error due to the time discretization is
\begin{equation}
	\left\|H_{\mu_p,j_p}(\tau_p)-H_{\mu_p,j_p}(\tilde\tau_p) \right\|
		\le \max_{t,\mu}\left\|\partial_t H_{\mu}(t)\right\|\sigma.\label{eq:HDerivBd}
\end{equation}
By using a value of~$\numBitsInT$ that satisfies~\eqref{eq:n'eq0}, we obtain
\begin{equation}
	 \left\|H_{\mu_p,j_p}(\tau_p)-H_{\mu_p,j_p}(\tilde\tau_p)\right\|\le\frac{\epsilon}{(32kMd^2)(5/3)^{k-1}\dt}.
\end{equation}

Next we consider the error due to the discretization of~$\sparseham$. The matrix elements are encoded in polar form, so errors emerge because of inaccuracies in the modulus as well as the phase. Using the triangle inequality, the error is bounded above by $\epsilon_{\rho}+\epsilon_{\phi}\upp$, where~$\epsilon_{\phi}$ is the discretization error in the phase, $\epsilon_{\rho}$ is the error in the modulus, and~$\upp$ is an upper bound for the magnitudes of the matrix elements. We choose the same number of bits to encode the modulus and the phase.  Then taking $\numQubitsInH$ to exceed the value in~\eqref{eq:n'eq0}, the error in the modulus and phase satisfy
\begin{align}
\epsilon_{\rho} &\le \upp/ \left(2^{\numQubitsInH/2}\right) \le \frac 18 \frac{\epsilon}{(32kMd^2)(5/3)^{k-1}\dt} \\
\epsilon_{\phi} &\le 2\pi/ \left(2^{\numQubitsInH/2}\right) \le \frac 18 \frac{2\pi\epsilon}{(32kMd^2)(5/3)^{k-1}\upp \dt}
\end{align}
Using these relations, we obtain
\begin{equation}
	\left\|H_{\mu_p,j_p}(\tilde\tau_p)- \tilde H_{\mu_p,j_p}(\tilde\tau_p)\right\|
		\le \epsilon_{\rho}+\epsilon_{\phi}\upp \le \frac{\epsilon}{(32kMd^2)(5/3)^{k-1}\dt}.
\end{equation}
Inserting Eqs.\ \eqref{eq:sumtbnd} and \eqref{eq:htri} into Eq.~\eqref{eq:allbnd} gives the error bound as
\begin{align}
	\max_{p} \biggr\{\biggr\|H_{\mu_p,j_p}(\tau_p)-&\tilde{H}_{\mu_p,j_p}(\tilde\tau_p)\biggr\|\biggr\}\sum_{p=1}^{N_{\exp}}|\dt_p|\le \frac{2\epsilon}{(32kMd^2)(5/3)^{k-1}\dt}\times 8kMd^2(5/3)^{k-1}\Delta t = \epsilon/2.
\end{align}
\end{proofof}

\section{Bounds On Derivatives of~$\Upsilonfcn$ \label{appendix:derivbd}}
In our adaptive simulation method we have required that~$\Upsilon$ is chosen such that there exists a constant $\scalconst$ such that~$|\partial_t\Upsilonfcn(t)| \le \scalconst^2[\Upsilonfcn(t)]^2$ for all times in the interval.  We show in this appendix that this requirement is natural
by demonstrating that it naturally emerges for Hamiltonians where~$\Upsilon$ is chosen to be the smallest permissible function.

Now define the set of functions $\{\Upsilon_P\}$ to be the smallest possible functions such that~$\{H_\alphavar\}$ is $\Upsilon_P$-$P$-pointwise-smooth.
Taking the derivative of~$\Upsilonfcn_{P}(t)$ gives
\begin{align}
	\Upsilonfcn_{P}'(t)
		&= \lim_{\delta t\to 0} \frac{\Upsilonfcn_{P}(t+\delta t)-\Upsilonfcn_{P}(t)}{\delta t} \nonumber \\
		&\le \sup\left\{\left.\left(\partial_{u} \left\|H(u)\right\|\right)\right|_{u=t},\cdots,
		\left.\left(\partial_{u} \|\partial_{u}^{p-1}H(u)\|^{1/p}\right)\right|_{u=t},\cdots\,\left.\left(\partial_{u}
		 \left\|\partial_{u}^{P}H(u)\right\|^{1/(P+1)}\right)\right|_{u=t}\right\} \nonumber \\
		&\le \sup\left\{\frac 1p
		 \left\|H^{(p-1)}(t)\right\|^{1/p-1}\left\|H^{(p)}(t)\right\|:p=1,\ldots,P+1\right\}.
\end{align}
There are now two possible cases, either
\begin{equation*}
	 \left\|H^{(p-1)}(u)\right\|^{1/p}\leq\left\|H^{(p)}(u)\right\|^{1/(p+1)}
\end{equation*}
or
\begin{equation*}
	 \left\|H^{(p-1)}(u)\right\|^{1/p}\geq\left\|H^{(p)}(u)\right\|^{1/(p+1)}.
\end{equation*}
In the first case we obtain
\begin{equation}
	\frac 1p\left\|H^{(p-1)}(t)\right\|^{1/p-1}\left\|H^{(p)}(t)\right\| \le \frac 1p\left\|H^{(p)}(t)\right\|^{2/(p+1)},
\end{equation}
and in the second case
\begin{equation}
	\frac 1p\left\|H^{(p-1)}(t)\right\|^{1/p-1}\left\|H^{(p)}(t)\right\| \le \frac 1p\left\|H^{(p-1)}(t)\right\|^{2/p}.
\end{equation}
From the definition of~$\Upsilonfcn_{P}(t)$, $\|H^{(p)}(t)\|^{1/(p+1)}\leq \Upsilonfcn_{P}(t)$ for $p=0,1,...,P$.
However, because~$\|H^{(P+1)}(t)\|^{1/(P+2)}$ is not included in the definition of~$\Upsilonfcn_{P}(t)$, we use $\|H^{(P+1)}(t)\|^{1/(P+2)}\leq \Upsilonfcn_{P+1}(t)$ to bound it.
Then, using the fact that~$\Upsilonfcn_{P+1}(t)\geq \Upsilonfcn_{P}(t)$ and~$p\ge 1$, the derivative of~$\Upsilonfcn_{P}(t)$ is bounded above by
\begin{equation}
\Upsilonfcn_{P}'(t) \le [\Upsilonfcn_{P+1}(t)]^2.
\end{equation}
A lower bound for the derivative can be obtained in the same way, giving the general result
\begin{equation}
|\Upsilonfcn_{P}'(t)| \le [\Upsilonfcn_{P+1}(t)]^2.\label{eq:upsilonPderiv}
\end{equation}
If there exists a constant $\scalconst$ such that for all $t\in[\initt,\initt+\dt]$, $\Upsilonfcn_{P+1}(t)\le \scalconst\Upsilonfcn_{P}(t)$, then we obtain the restriction in Theorem \ref{thm:adaptiveResult}, $|\Upsilonfcn'_{P}(t)| \le \scalconst^2[\Upsilonfcn_{P}(t)]^2$.

In the case where~$\Upsilonfcn_\infty$ is taken to be the smallest possible function such that~$\{H_\alphavar\}$ is $\Upsilonfcn_\infty$-$\infty$-pointwise-smooth, this restriction need not be made.  We see from taking the limit as~$P\to\infty$ of~\eqref{eq:upsilonPderiv} that
\begin{equation}
	|\Upsilonfcn_{\infty}'(t)| \le [\Upsilonfcn_{\infty}(t)]^2.
\end{equation}
This means that, if $\{H_\alphavar\}$ is $\Upsilon$-$\infty$-pointwise-smooth, the condition~$|\Upsilonfcn'(t)| \le [\Upsilonfcn(t)]^2$ should hold if $\Upsilonfcn(t)$ is chosen appropriately.
(It does not imply this condition, because~$\Upsilonfcn(t)$ could be chosen poorly.)

\acknowledgements
NW thanks A. Hentschel and A. Childs for many helpful comments.  We acknowledge MITACS research network,
General Dynamics Canada, USARO and \textit{i}CORE for financial support.
PH is a CIFAR Scholar, and BCS is a CIFAR Fellow.

\end{document}